\theoremstyle{definition}
\newtheorem{theorem}{Theorem}
\newtheorem{lemma}[theorem]{Lemma}
\newtheorem{proposition}[theorem]{Proposition}
\newtheorem{remark}[theorem]{Remark}
\newcommand{\cI}{\mathcal{I}}
\newcommand{\cM}{\mathcal{M}}
\newcommand{\Real}{\mathbb{R}}
\newcommand{\Nat}{\mathbb{N}}
\newcommand{\<}{\langle}
\renewcommand{\>}{\rangle}
\newcommand{\Mapsto}{{\mathop{\mapsto}}}
\newcommand{\phiChunk}{\psi_{001011}}
\newcommand{\phiBreak}{\psi_{0011}}
\newcommand{\phiStart}{\psi_{01}}
\newcommand{\phiEnd}{\psi_{10}}
\newcommand{\PhiChunk}{\Psi_{001011}}
\newcommand{\PhiEnd}{\Psi_{10}}
\newcommand{\Chi}{X}
\newcommand{\Add}{\text{add}}
\newcommand{\Succ}{\text{succ}}
\newcommand{\Times}{\text{times}}
\begin{document}
\title{The Universal Fragment of Presburger Arithmetic with Unary Uninterpreted Predicates is Undecidable}
\author{
	\begin{tabular}{l}
		Matthias Horbach\\
		\small\textit{Max Planck Institute for Informatics, Saarland Informatics Campus, Saarbr\"ucken, Germany}
	\end{tabular}
	\and
	\begin{tabular}{l}
		Marco Voigt\\
		\small\textit{Max Planck Institute for Informatics, Saarland Informatics Campus, Saarbr\"ucken, Germany,}\\
		\small\textit{Saarbr\"ucken Graduate School of Computer Science}
	\end{tabular}
	\and
	\begin{tabular}{l}
		Christoph Weidenbach \\
		\small\textit{Max Planck Institute for Informatics, Saarland Informatics Campus, Saarbr\"ucken, Germany}
	\end{tabular}
}	
\date{}
\maketitle

\begin{abstract}
	The first-order theory of addition over the natural numbers, known as Presburger arithmetic, is decidable in double exponential time.
	Adding an uninterpreted unary predicate to the language leads to an undecidable theory.
	We sharpen the known boundary between decidable and undecidable in that we show that the purely universal fragment of the extended theory is already undecidable.
	Our proof is based on a reduction of the halting problem for two-counter machines to unsatisfiability of sentences in the extended language of Presburger arithmetic that does not use existential quantification.
	On the other hand, we argue that a single $\forall\exists$ quantifier alternation turns the set of satisfiable sentences of the extended language into a $\Sigma^1_1$-complete set.
	
	Some of the mentioned results can be transfered to the realm of linear arithmetic over the ordered real numbers.
	This concerns the undecidability of the purely universal fragment and the $\Sigma^1_1$-hardness for sentences with at least one quantifier alternation.
	
	Finally, we discuss the relevance of our results to verification. 
	In particular, we derive undecidability results for quantified fragments of separation logic, the theory of arrays, and combinations of the theory of equality over uninterpreted functions with restricted forms of integer arithmetic. 
	In certain cases our results even imply the absence of sound and complete deductive calculi.
\end{abstract}

\section{Introduction}\label{section:Introduction}

In 1929 Moj\. zesz Presburger presented a quantifier elimination procedure that decides validity of first-order sentences over the natural numbers with addition \cite{Presburger1929} (see \cite{Stansifer1984} for an English translation and \cite{Enderton2001} for a textbook exposition).
Today, this theory is known as \emph{Presburger} arithmetic. 
In 1974 the computational time complexity of deciding validity of its sentences was shown to be double exponential by Fischer and Rabin~\cite{Fischer1974}.
It has been proved in several ways that the addition of a single uninterpreted unary predicate symbol to the language renders the validity problem undecidable.
In 1957 Putnam \cite{Putnam1957} discussed this theory as one example of an undecidable theory that is somewhat stronger than the decidable theory of natural numbers with the successor function and a uninterpreted unary predicates. 
Lifshits mentioned  in a note \cite{Lifshits1969} (without giving a proof) that the addition of one predicate---of unspecified arity---to Presburger arithmetic leads to undecidability.
In a technical report \cite{Downey1972} from 1972 Downey gave an encoding of two-counter machines and their halting problem in Presburger arithmetic with a single unary predicate symbol.
Moreover, undecidability is also implied by a general result due to Garfunkel and Schmerl \cite{Garfunkel1974} published in 1974.
Seventeen years later Halpern \cite{Halpern1991} strengthened the undecidability result in that he proved $\Pi^1_1$-completeness of this problem.
Only recently, Speranski \cite{Speranski2013a} gave an alternative characterization of the analytical hierarchy that is based on a reduction of $\Pi^1_n$-formulas with multiplication to $\Pi^1_n$-formulas without multiplication. Halpern's $\Pi^1_1$-completeness can be read as a special case of this more general point of view.

Halpern's proof rests on a result by Harel, Pnueli and Stavi (Proposition 5.1 in \cite{Harel1983}), which states that the set of G\"odel numbers of recurring Turing machines is $\Sigma^1_1$-complete.\footnote{Halpern's proof shifts the perspective from the validity problem to the problem of satisfiability. A $\Sigma^1_1$-complete satisfiability problem entails a $\Pi^1_1$-complete validity problem and vice versa, given that the considered languages are closed under negation. For the definition of the analytical hierarchy and the sets $\Pi^1_1$ and $\Sigma^1_1$, see, e.g., Chapter~IV.2 in \cite{Odifreddi1992} or Chapter~16 in \cite{Rogers1987}.} A nondeterministic Turing machine is considered to be \emph{recurring} if, started on an empty input tape, it is able to perform a nonterminating computation in which it infinitely often reaches its initial state (but not necessarily its initial configuration).
The encoding of recurring Turing machines that Halpern employs in his proof results in formulas with two quantifier alternations. More precisely, the used sentences start with a $\forall^* \exists^* \forall^*$-prefix of first-order quantifiers when written in prenex normal form. 
The reduction by Speranski \cite{Speranski2013a} relies on the same pattern of quantifier alternations. 
However, the required quantifier alternation in Halpern's proof can be simplified to $\forall^* \exists^*$, as pointed out by Speranski in \cite{Speranski2013b}.
Formally, Downey's encoding of two-counter machines in \cite{Downey1972} exhibits a $\forall \exists$ alternation as well. However, in this case, suitable modifications lead to an encoding that does not require existential quantification.
A crucial difference between Downey's encoding and ours is that the former concentrates on reachability of configurations, while the latter also considers the temporal order in which configurations are reached. 
One consequence is that our encoding facilitates the formalization  of \emph{recurrence} for nondeterministic two-counter machines. 
This requires some chronological information regarding the configurations that occur in a run that goes beyond reachability.

In the main part of the present paper we restrict the admitted language so that only universal first-order quantifiers may be used. Yet, the resulting validity and satisfiability problems remain undecidable.\footnote{In fact, this result can be obtained from Downey's proof \cite{Downey1972}---even for Horn clauses---after suitable modifications to his encoding of two-counter machines. 
Apparently, Downey was not concerned with minimizing quantifier prefixes.}
To be more precise, we show $\Sigma^0_1$-completeness of the set of unsatisfiable sentences from the universal fragment of Presburger arithmetic extended with a single uninterpreted unary predicate symbol (cf.\ Theorems~\ref{theorem:UndecidabilityOverIntegers} and \ref{theorem:UnsatUnivPresSigmaZeroOneComplete}). As it turns out, this result is still valid when we use the reals as the underlying domain (Theorem~\ref{theorem:UndecidabilityOverReals}).

Our proof proceeds by a reduction of the (negated) halting problem for two-counter machines (cf.\ \cite{Minsky1967}) to the satisfiability problem in the described language.
A run of such a machine started with a certain input can be represented by a (potentially infinite) sequence of configurations $\<\ell, c_1, c_2\>$---triples of natural numbers---, where $\ell$ describes the control state of the machine and $c_1$, $c_2$ are the current values of the machine's counters.
It is not very hard to imagine that such a sequence of configurations can be encoded by a (potentially infinite) string of bits.
On the other hand, we can conceive any interpretation of a unary predicate over the natural numbers as a bit string. 
Given this basic idea, it remains to devise a translation of the program of an arbitrary two-counter machine into a suitable set of sentences from the universal fragment of Presburger arithmetic with an additional uninterpreted unary predicate symbol $P$. Suitable in this case means that any model of the resulting set of formulas interprets $P$ such that it faithfully represents a run of the given machine on the given input. Section~\ref{section:Encoding} is devoted to exactly this purpose. In Section~\ref{section:Preliminaries} we recap the necessary preliminaries.

In Section~\ref{section:SigmaOneOneCompleteness} we relax our language restrictions a bit and show that allowing one quantifier alternation entails a high degree of undecidability.
More precisely, the set of satisfiable $\forall^* \exists^2$-sentences is $\Sigma^1_1$-complete. 
The proof rests on a lemma, due to Alur and Henzinger \cite{Alur1994}, that rephrases Harel et al.'s $\Sigma^1_1$-hardness result for recurring Turing machines in terms of recurring two-counter machines.
In order to apply this lemma, we have to adapt the encoding presented in Section~\ref{section:Encoding} only slightly. All we need to do is to add the possibility of nondeterministic branching of the control flow and to replace the check for the reachability of the \texttt{halt} instruction by a condition that formalizes the recurrence property. 

Moreover, we observe that our undecidability and $\Sigma^1_1$-hardness results for settings over the integer domain can be transfered to corresponding results in the realm of real numbers. We do so at the end of Sections~\ref{section:Encoding} and \ref{section:SigmaOneOneCompleteness}, respectively.

Finally, we discuss the relevance of our findings to the field of verification in Section~\ref{section:Verification}. 
In particular, we derive undecidability results for quantified fragments of separation logic (Section~\ref{section:VerificationSeparationLogic}), the theory of arrays (Section~\ref{section:VerificationDataStructures}), and combinations of the theory of equality over uninterpreted functions with restricted forms of integer arithmetic (Sections~\ref{section:VerificationEUFwithCounterArithmetic} and \ref{section:VerificationEUFwithOffsets}). 
In certain cases our results even imply the absence of sound and complete deductive calculi.

The authors would like to stress that all of the results outlined above are obtained based on refinements of the encoding of two-counter machines presented in Section~\ref{section:FirstEncoding}. 
To the authors' knowledge, a similarly general applicability is not documented for any other encoding of hard problems in the language of Presburger arithmetic augmented with uninterpreted predicate symbols.

\section{Preliminaries}\label{section:Preliminaries}
\subsection{The universal fragment of Presburger arithmetic}\label{section:basicDefinitionsPresburger}

We define the language of \emph{Presburger arithmetic} to comprise all first-order formulas with equality over the signature $\<0,1,+\>$.
We use the following abbreviations, where $s$ and $t$ denote arbitrary terms over the signature $\<0,1,+\>$:
\begin{itemize}
	\item $s \neq t$ abbreviates $\neg (s = t)$,
	\item $s \leq t$ abbreviates $\exists z.\, s + z = t$,
	\item $s < t$ abbreviates $s+1 \leq t$,
	\item for any integer $k\geq 1$ we use the constant $k$ as abbreviation for the sum $1 + \ldots + 1$ with $k$ summands,
	\item for any integer $k\geq 1$ and any variable $x$ we write $kx$ to abbreviate $x + \ldots + x$ with $k$ summands.
\end{itemize}
For notational convenience, we shall also use the relation symbols $\leq, <$ in their symmetric variants $\geq$ and $>$, respectively.
We follow the convention that negation binds strongest, that conjunction binds stronger than disjunction, and that all of the aforementioned bind stronger than implication. The scope of quantifiers shall stretch as far to the right as possible.

The \emph{universal fragment of Presburger arithmetic} confines the language of Presburger arithmetic to sentences in prenex normal form in which universal quantification is allowed but existential quantification may not occur. The abbreviations $s \leq t$ and $s < t$ are exempt from the rule, i.e.\ we pretend that they do not stand for a formula that contains a quantifier.

This exemption does not constitute a serious weakening of the restriction to universal quantification, because any atom $s \leq t$ can be replaced with an atom $\neg (t < s)$, which is equivalent to $\forall  z.\, \neg (t+1+z = s)$.
For instance, the sentence 
	\[ \varphi \;:=\; \forall x y.\, x = y \,\longrightarrow\, x \leq y \]
belongs to the universal fragment of Presburger arithmetic, although it is actually a short version of 
	$\forall x y.\, x = y \,\longrightarrow\, \exists z.\, x + z = y$.
However, $\varphi$ is equivalent to 
	$\forall x y.\, x = y \,\longrightarrow\,  \neg (y < x)$,
which stands for 
	$\forall x y.\, x = y \,\longrightarrow\,  \neg (\exists z.\, y+1 + z = x)$
and is thus equivalent to 
	\[ \forall x y z.\, x = y  \,\longrightarrow\, y+1 + z \neq x ~.\]

\subsection{Minsky's two-counter machines}

Minsky has introduced the two-counter machine as a Turing-complete model of computation (Theorem 14.1-1 in \cite{Minsky1967}).
We shall only briefly recap the basic architecture of this kind of computing device.

A \emph{two-counter machine $\cM$} consists of two counters $C_1, C_2$ and a finite program whose lines are labeled with integers $0, \ldots, K$.
Each program line contains one of five possible instructions with the following meaning:
\begin{center}
	\begin{tabular}{cl}
		\texttt{inc($C_1$)}			&	increment counter $C_1$ and proceed with the next instruction; \\[1ex]
		\texttt{inc($C_2$)}			&	increment counter $C_2$ and proceed with the next instruction; \\[1ex]
		\texttt{test\&dec($C_1$,$\ell$)}	&	if $C_1 > 0$ then decrement $C_1$ and proceed with the next instruction, \\
							&	otherwise proceed with instruction $\ell$ and leave the counters unchanged; \\[1ex]
		\texttt{test\&dec($C_2$,$\ell$)}	&	if $C_2 > 0$ then decrement $C_2$ and proceed with the next instruction, \\
							&	otherwise proceed with instruction $\ell$ and leave the counters unchanged; \\[1ex]
		\texttt{halt}				&	halt the computation.
	\end{tabular}	
\end{center}	
We tacitly assume that the last program line of any two-counter machine contains the \texttt{halt} instruction.
In the initial state of a given two-counter machine the input is stored in the two counters.
The computation of the machine starts at the first program line, labeled $0$.

Notice that the described machine model leads to deterministic computation processes.
Since the described machine model is strong enough to simulate any deterministic Turing machine, the halting problem for two-counter machines is undecidable.
\begin{proposition}[corollary of Theorem~14.1-1 from \cite{Minsky1967}]\label{proposition:UndecidabilityTwoCounterMachines}
	It is impossible to devise an algorithm that is able to decide for every two-counter machine $\cM$ and every input $\<m,n\> \in \Nat \times \Nat$ whether $\cM$ ever reaches a program line containing the \texttt{halt} instruction when started on $\<m,n\>$.
\end{proposition}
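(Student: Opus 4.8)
The plan is to obtain this statement as a consequence of the Turing-completeness of two-counter machines, i.e.\ of Theorem~14.1-1 in \cite{Minsky1967}, together with the undecidability of the halting problem for Turing machines. Concretely, I would exhibit a computable map that sends a pair $\<M, w\>$, consisting of a deterministic single-tape Turing machine $M$ and an input word $w$, to a pair $\<\cM, \<m,n\>\> \in$ (two-counter machines) $\times\, (\Nat \times \Nat)$ such that $M$ halts on $w$ if and only if $\cM$ ever reaches a \texttt{halt} line when started on $\<m,n\>$. Since the halting problem for Turing machines is undecidable, no algorithm can decide the halting problem for two-counter machines either: composing such an algorithm with the reduction would decide the former.

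The reduction itself proceeds through a short chain of intermediate machine models. First, a one-tape Turing machine is simulated by a device with two pushdown stacks, the two stacks holding the tape content to the left and to the right of the head. Each stack over a finite alphabet can in turn be represented by a natural number via positional notation, so the two-stack machine becomes a register machine equipped with finitely many counters, each supporting increment, a zero-test-guarded decrement, and conditional jumps; pushing and popping stack symbols translate into a constant number of such counter operations. The remaining and crucial step is to compress finitely many counters $c_1, \dots, c_k$ into just two. For this I would use the G\"odel-style encoding that stores the tuple $\<c_1, \dots, c_k\>$ as the single number $p_1^{c_1} p_2^{c_2} \cdots p_k^{c_k}$ in counter $C_1$, where $p_1, \dots, p_k$ are the first $k$ primes, and use $C_2$ as scratch space. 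Incrementing $c_i$ amounts to multiplying the contents of $C_1$ by $p_i$; decrementing $c_i$ when positive to dividing by $p_i$; and testing $c_i = 0$ to checking that $p_i$ does not divide the contents of $C_1$. Each of these arithmetic operations on $C_1$ is implemented by a small fixed subprogram that repeatedly transfers units between $C_1$ and $C_2$.

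The main obstacle is exactly this last compression step: one has to show that multiplication by a constant, division by a constant, and divisibility tests can be carried out faithfully with only two counters, never losing or corrupting the encoded state, and that the whole construction is effective, so that the program of $\cM$ is genuinely computed from $M$, $w$, and the chosen primes. Once these routines are in place, correctness of the overall simulation follows by induction on the number of steps of $M$, matching each configuration of $M$ with the corresponding configuration of $\cM$, and the statement follows as described in the first paragraph. Alternatively, one may simply cite Theorem~14.1-1 of \cite{Minsky1967}, which already packages this construction, and observe that the undecidability of the two-counter halting problem is its immediate corollary.
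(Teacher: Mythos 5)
Your proposal is correct and matches the paper's treatment: the paper likewise obtains this proposition directly from Minsky's Theorem~14.1-1 (Turing-completeness of two-counter machines) together with the undecidability of the Turing machine halting problem, without reproving the simulation. The stack-pair and prime-power compression steps you sketch are exactly the construction that Minsky's theorem packages, so citing it, as both you and the paper ultimately do, suffices.
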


\section{Encoding a deterministic two-counter machine}\label{section:Encoding}

Since validity of Presburger arithmetic sentences is decidable\footnote{Regarding sentences, i.e.\ closed formulas, of Presburger arithmetic, validity and satisfiability coincide. The reason is that the domain is fixed to the nonnegative integers and all language elements in the underlying signature $\<0,1,+\>$ have a fixed interpretation. As soon as we add uninterpreted operations or relations, the two notions differ. In this latter case, we shall consider decidability of the satisfiability problem rather than of the validity problem.}, we need some additional  language element in order to encode the computations of a two-counter machine on a given input.
It turns out that it is sufficient to add an uninterpreted unary predicate symbol $P$ to the underlying signature and thus consider first-order sentences over the extended signature $\<0,1,+,P\>$.
As soon as we have constructed a sentence $\varphi$ that encodes a given machines $\cM$ together with a given input pair $\<m,n\>$, we are interested in the (un)satisfiability of $\varphi$.
Hence, we pose the question: Is there an interpretation $\cI$ with $P^\cI \subseteq \Nat$ such that $\cI \models \varphi$, or is there no such interpretation?

\subsection{Informal description of the encoding}

Since any interpretation $P^\cI$ of the predicate symbol $P$ is a subset of the natural numbers, we can view $P^\cI$ as an infinite sequence of bits $b_0 b_1 b_2 \ldots$, where for every $n \in \Nat$ we have
	\[ b_n := \begin{cases} 0 & \text{if $n \not\in P^\cI$},\\  1 & \text{if $n \in P^\cI$}. \end{cases} \]	
Given a two-counter machine $\cM$ with $K + 1$ program lines, labeled $0, \ldots, K$, and two input values $m, n$, we shall encode all the configurations that occur during the run of $\cM$ when started on input $\<m,n\>$.
One such configuration consists of the address of the program line that is to be executed in the current step, the value of the first counter $C_1$, and the value of the second counter $C_2$.
We divide the bit sequence $P^\cI$ into chunks of growing length, each delimited by the bit sequence $001011$.
Such a chunk is divided into three subchunks, using the bit sequence $0011$ as a delimiter.
The first subchunk holds the current program line encoded in unary.
The second and third subchunk store the current values of the counters $C_1, C_2$, respectively, also encoded in unary notation.
Hence, every chunk has the form
	\[ \underbrace{001011}_{\text{\scriptsize\begin{tabular}{c}left de-\\ limiter \end{tabular}}}\hspace{-1ex}
		1^\ell
		0 \ldots 0\hspace{-2ex}
		\underbrace{0011}_{\text{\scriptsize\begin{tabular}{c}first sub-\\ delimiter \end{tabular}}}\hspace{-3ex}
		1^{c_1}
		0 \ldots 0\hspace{-1.5ex}
		\underbrace{0011}_{\text{\scriptsize\begin{tabular}{c}second\\ subde- \\ limiter \end{tabular}}}\hspace{-2ex}
		1^{c_2}
		0 \ldots 0
		~,\]
where $\ell$ is the address of the program line to be executed, $c_1$ is the value currently stored in counter $C_1$, and $c_2$ is the value currently stored in counter $C_2$.
The subsequences $1^\ell$, $1^{c_1}$ and $1^{c_2}$ are followed by blocks of zeros that fill up the gap before the next $0011$-delimiter (indicating the start of the subsequent subchunk) or the next $001011$-delimiter (indicating the beginning of the successor configuration).

The length of each chunk and its subchunks increases with the number of computation steps that have already been performed.
This makes sure that there is always enough space available to store the current counter values, which may thus become arbitrarily large.
Of course, we have to provide sufficient space in the beginning such that the address of any program line and the initial counter values $m$ and $n$ may be stored.
In order to achieve this, we define the constant $d := \max\{K, m, n\} + 6$ and require that the leftmost chunk starts at position $d$, i.e.\ there is a $001011$-delimiter starting at position $d$ but none starting left of $d$.
The first three subchunks have length $d$ each. 
Thus, the second chunk starts at position $4c$. The subchunks of the second chunk, however, have a length of $4 d$.\footnote{Technically, a length of $d+1$ for the subchunks of the second chunk would suffice. After all, the value of a counter can increase by at most one  in a single computation step. However, we have chosen to increase the length in an exponential fashion rather than a linear one, in order to keep the encoding simple.} Hence, the total length of the second chunk is $12 d$.
This scheme continues indefinitely, i.e.\ the starting points of the chunks in the bit sequence are $d, 4 d, 16 d, 64 d, 256 d$, and so on.
Consequently, all the chunks are large enough to store all possibly occurring counter values, as these can increase by at most one in every step of the computation.

The following figure illustrates the structure of a single chunk in the sequence, starting at position $x$.
\begin{center}
	\begin{picture}(450,100)
		\put(-2,35){\mbox{$0$}}
		
		\put(0,50){\line(1,0){15}}
		\put(19,47){\mbox{$\ldots$}}
		\put(35,50){\line(1,0){350}}
		\put(389,47){\mbox{$\ldots$}}
		\put(405,50){\vector(1,0){40}}
		
		\put(0,43){\line(0,1){15}}
		\put(45,43){\line(0,1){7}}
		\put(140,46){\line(0,1){4}}
		\put(230,46){\line(0,1){4}}
		\put(320,43){\line(0,1){7}}
		\put(410,46){\line(0,1){4}}
		
		\put(43,51){\mbox{\small$001011$}}
		\put(138,51){\mbox{\small$0011$}}
		\put(228,51){\mbox{\small$0011$}}
		\put(318,51){\mbox{\small$001011$}}
		\put(408,51){\mbox{\small$0011\ldots$}}
		
		\multiput(71,50)(0,4){3}{\line(0,1){2}}
		\multiput(156,50)(0,4){3}{\line(0,1){2}}
		\multiput(246,50)(0,4){3}{\line(0,1){2}}
		\multiput(346,50)(0,4){3}{\line(0,1){2}}
		\multiput(426,50)(0,4){3}{\line(0,1){2}}
		
		\put(72,51){\mbox{\small$\overbrace{1\hspace{3ex}\ldots\hspace{3ex} 1}0\ldots$}}
		\put(157,51){\mbox{\small$\overbrace{1\hspace{3ex}\ldots\hspace{3ex} 1}0\ldots$}}
		\put(247,51){\mbox{\small$\overbrace{1\hspace{3ex}\ldots\hspace{3ex} 1}0\ldots$}}
		\put(347,51){\mbox{\small$1\ldots 10\ldots$}}
		
		\put(58,75){\small\begin{tabular}{c}unary encoding \\[-0.5ex] of the current\\[-0.5ex] program line \end{tabular}}
		\put(143,75){\small\begin{tabular}{c}unary encoding \\[-0.5ex] of the current\\[-0.5ex] value of $C_1$ \end{tabular}}
		\put(233,75){\small\begin{tabular}{c}unary encoding \\[-0.5ex] of the current\\[-0.5ex] value of $C_2$ \end{tabular}}
		
		\put(45, 80){\line(0,1){20}}
		\put(45,90){\vector(-1,0){25}}
		\put(-12,88){\small\begin{tabular}{c} earlier\\[-0.5ex] steps \end{tabular}}
		\put(320, 80){\line(0,1){20}}
		\put(320,90){\vector(1,0){25}}
		\put(340,88){\small\begin{tabular}{c} later\\[-0.5ex] steps \end{tabular}}
		
		\put(0,0){\line(0,1){30}}
		\put(45,0){\line(0,1){30}}
		\put(140,15){\line(0,1){15}}
		\put(230,15){\line(0,1){15}}
		\put(320,0){\line(0,1){30}}
		\put(410,15){\line(0,1){15}}
		
		\put(1,15){\vector(1,0){43}}
		\put(44,15){\vector(-1,0){43}}
		\put(22,7){\mbox{$x$}}
		\put(46,7){\vector(1,0){273}}
		\put(319,7){\vector(-1,0){273}}
		\put(182,-1){\mbox{$3x$}}
		\put(46,22){\vector(1,0){93}}
		\put(139,22){\vector(-1,0){93}}
		\put(92,14){\mbox{$x$}}
		\put(141,22){\vector(1,0){88}}
		\put(229,22){\vector(-1,0){88}}
		\put(187,14){\mbox{$x$}}
		\put(231,22){\vector(1,0){88}}
		\put(319,22){\vector(-1,0){88}}
		\put(277,14){\mbox{$x$}}
		\put(321,22){\vector(1,0){88}}
		\put(409,22){\vector(-1,0){88}}
		\put(365,14){\mbox{$4x$}}
	\end{picture}		
\end{center}

\subsection{Formal encoding of two-counter machine computations}\label{section:FirstEncoding}

Recall that we assume to be given a two-counter machine $\cM$ with $K + 1$ program lines, labeled $0, \ldots, K$, and two input values $m$ and $n$.
We use the following abbreviations for arbitrary terms $t$:
\begin{align*}
	\phiChunk(t) &:= \neg P(t) \;\wedge\; \neg P(t+1) \;\wedge\; P(t+2) \;\wedge\; \neg P(t+3) \;\wedge\; P(t+4) \;\wedge\; P(t+5) \\
	\phiBreak(t) &:= \neg P(t) \;\wedge\; \neg P(t+1) \;\wedge\; P(t+2) \;\wedge\; P(t+3) \\
	\phiStart(t) &:= \neg P(t) \;\wedge\; P(t+1) \\
	\phiEnd(t) &:= P(t) \;\wedge\; \neg P(t+1) \\[1ex]
	\chi_j(t) &:= \phiEnd(t+5+j)	\hspace{10ex} \text{for $j = 0, \ldots, K$}
\end{align*}

First of all, we set up the general structure of the predicate $P$.
Let $d$ denote the integer with the value $d := \max\{K+6, m+4, n+4\}$. 
We use $d$ as the starting point of our encoding.
\begin{align}
	\varphi_1 :=\;\; \label{formula:I}
	&\phiChunk(d) \\
	\label{formula:II}
	&\wedge\; \bigl( \forall x.\; x < d \;\longrightarrow\; \neg P(x) \bigr) \\
	\label{formula:III}
	&\wedge\; \bigl( \forall x.\; \phiChunk(x) \;\;\longrightarrow\;\; \phiBreak(2x) \;\wedge\; \phiBreak(3x) \;\wedge\; \phiChunk(4x) \bigr) \\
	\label{formula:IV}
	&\wedge\; \bigl( \forall x y.\; \phiChunk(x) \;\wedge\; \phiChunk(y) \;\wedge\; x \leq y \;\wedge\; y < 4x \;\;\longrightarrow\;\; x=y \bigr) \\
	\label{formula:V}
	&\wedge\; \bigl( \forall x y.\; \phiChunk(x) \;\wedge\; \phiBreak(y) \;\wedge\; x \leq y \;\;\longrightarrow\;\; y \geq 2x \bigr) \\
	\label{formula:VI}
	&\wedge\; \bigl( \forall x y.\; \phiChunk(x) \;\wedge\; \phiBreak(y) \;\wedge\; 2x < y \;\;\longrightarrow\;\; y \geq 3x \bigr) \\
	\label{formula:VII}
	&\wedge\; \bigl( \forall x y.\; \phiChunk(x) \;\wedge\; \phiBreak(y) \;\wedge\; 3x < y \;\;\longrightarrow\;\; y \geq 4x \bigr) \\
	\label{formula:VIII}
	&\wedge\; \bigl( \forall x y u.\; \phiChunk(x) \;\wedge\; \phiStart(y) \;\wedge\; x+5 < y \;\wedge\; y < 4x \;\wedge\; u + 1 = y \;\;\longrightarrow\;\; \phiBreak(u) \bigr)
\end{align}
Formula~(\ref{formula:I}) sets the first $001011$-delimiter at position $d$ and Formula~(\ref{formula:II}) ensures that this is indeed the leftmost such delimiter.
Formula~(\ref{formula:III}) sets up all the other delimiters and Formulas~(\ref{formula:IV}) to (\ref{formula:VII}) guarantee that there are no spurious delimiters in between them.
Formula~(\ref{formula:VIII}) stipulates that every $01$ substring is part of one of the delimiters, i.e.\ there cannot be a substring $01$ that lies outside of a $001011$- or $0011$-delimiter.
This does also entail that between one delimiter ($001011$ or $0011$) and the subsequent one there is exactly one substring $10$, possibly overlapping with the last or first bit of one of the delimiters. Hence, this substring uniquely marks the end of the number encoded in the respective subchunk.

There is one peculiarity in Formula~(\ref{formula:VIII}) that is worth noticing, namely, the role of the variable $u$.
We need to introduce it to facilitate the formulation of the term $y -1$, since the signature of Presburger arithmetic does not contain the minus operation.
Hence, informally, Formula~(\ref{formula:VIII}) stands for
	\[ \forall x y.\; \phiChunk(x) \;\wedge\; \phiStart(y) \;\wedge\; x+5 < y \;\wedge\; y < 4x \;\;\longrightarrow\;\; \phiBreak(y-1) ~.\]
We will use this pattern again later, when we shall encode the decrement operation for counters.

The following formula sets the initial values of the counters. Moreover, it sets the initial program line, which we assume to be the very first one:
	\[ \varphi_2 :=\;\; \chi_0(d) \;\;\wedge\;\; \phiEnd(2d+3+m) \;\;\wedge\;\; \phiEnd(3d+3+n) ~.\]
Regarding the encoding of program lines, we have to enforce that the current program line never exceeds $K$. 
This is easily done with the formula
	\[ \varphi_3 :=\;\;  \forall x y.\; \phiChunk(x) \;\;\wedge\;\; \phiEnd(y) \;\;\wedge\;\; x+5 \leq y \;\;\wedge\;\; y \leq 2x \;\;\longrightarrow\;\; y \leq x + 5 + K ~.\]
The previous formulas already ensure that exactly one address of a program line is encoded. 

Next we encode the control flow of $\cM$. 
We assume that the following instructions occur in program line $j$ for some $j \in \{0, \ldots, K\}$.
\begin{description}
	\item Encoding of the instruction $j:\; \texttt{inc($C_1$)}$:
		\begin{align*} 
			\forall x y z.\; &\phiChunk(x) \;\wedge\; 2x \leq y \;\wedge\; y \leq 3x \;\wedge\; \phiEnd(y) \;\wedge\; 3x \leq z \;\wedge\; z \leq 4x \;\wedge\; \phiEnd(z) \;\wedge\; \chi_j(x) \\
			&\;\;\longrightarrow\;\; \phiEnd(6x+y+1) \;\wedge\; \phiEnd(9x + z) \;\wedge\; \chi_{j+1}(4x)
		\end{align*}
		The subfomula $\phiChunk(x)$ in the premise of the implication states that the chunk encoding the currently regarded configuration starts at position $x$.
		The other preconditions make clear that $y$ and $z$ correspond to the positions at which we find $10$-substrings in the two subchunks storing the current counter values:
			\[ \begin{array}{l@{}l@{}l@{}l@{}l}
				\,x	&	2x	&	y	&	3x	&	z
				\\
				\:\downarrow	&	\:\downarrow	&	\downarrow	&	\:\downarrow	&	\downarrow
				\\
				\underbrace{001011}_{\text{\scriptsize\begin{tabular}{c}left de-\\ limiter \end{tabular}}}\hspace{-1ex}
				1^\ell
				0 \ldots 0
				&
				\hspace{-2ex}\underbrace{0011}_{\text{\scriptsize\begin{tabular}{c}first sub-\\ delimiter \end{tabular}}}\hspace{-3ex}
				1^{c_1-1}
				&
				10 \ldots 0
				&
				\hspace{-1ex}\underbrace{0011}_{\text{\scriptsize\begin{tabular}{c}second\\ subde- \\ limiter \end{tabular}}}\hspace{-2ex}
				1^{c_2-1}
				&
				10 \ldots 0
			\end{array}
			\]
		Hence, $C_1$ and $C_2$ currently hold the values $c_1 = y - 2x - 3$ and $c_2 = z - 3x - 3$, respectively.
		Since the subsequent chunk starts at position $4x$ and its second and third subchunks start at positions $8x$ and $12x$, respectively, we know that there must be one $10$-substring at position $8x+3+c_1+1 = 6x+y+1$---the first counter is incremented by $1$---and one at position $12x+3+c_2 = 9x+z$---the value of the second counter remains unchanged.
		Moreover, the machine currently executes program line $j$ and is to continue at program line $j+1$.
		Therefore, we put the formula $\chi_j(x)$ in the premise and the formula $\chi_{j+1}(4x)$ into the consequent of the implication.
		
	\item Encoding of the instruction $j:\; \texttt{inc($C_2$)}$:
		\begin{align*} 
			\forall x y z.\; &\phiChunk(x) \;\wedge\; 2x \leq y \;\wedge\; y \leq 3x \;\wedge\; \phiEnd(y) \;\wedge\; 3x \leq z \;\wedge\; z \leq 4x \;\wedge\; \phiEnd(z) \;\wedge\; \chi_j(x) \\
			&\;\;\longrightarrow\;\; \phiEnd(6x+y) \;\wedge\; \phiEnd(9x+z+1) \;\wedge\; \chi_{j+1}(4x)
		\end{align*}				

	\item Encoding of the instruction $j:\; \texttt{test\&dec($C_1$,$\ell$)}$:
		\begin{description}
			\item The case of $C_1$ storing $0$:
				\begin{align*} 
					\forall x y z.\; &\phiChunk(x) \;\wedge\; 2x \leq y \;\wedge\; y \leq 3x \;\wedge\; \phiEnd(y) \;\wedge\; 3x \leq z \;\wedge\; z \leq 4x \;\wedge\; \phiEnd(z) \\
					&\wedge\; \chi_j(x) \;\wedge\; y = 2x+3 \\
					&\;\;\longrightarrow\;\; \phiEnd(6x+y) \;\wedge\; \phiEnd(9x+z) \;\wedge\; \chi_{\ell}(4x)
				\end{align*}
				The condition $y = 2x+3$ ensures that the first counter stores the value $0$.
		
			\item The case of $C_1$ storing a value greater than $0$:
				\begin{align*} 
					\forall x y z u.\; &\phiChunk(x) \;\wedge\; 2x \leq y \;\wedge\; y \leq 3x \;\wedge\; \phiEnd(y) \;\wedge\; 3x \leq z \;\wedge\; z \leq 4x \;\wedge\; \phiEnd(z) \\
					&\wedge\; \chi_j(x) \;\wedge\; y > 2x+3 \;\wedge\; u+1 = 6x+y \\
					&\;\;\longrightarrow\;\; \phiEnd(u) \;\wedge\; \phiEnd(9x+z) \;\wedge\; \chi_{j+1}(4x)
				\end{align*}				
				The condition $y > 2x+3$ ensures that the first counter stores a value strictly greater than $0$.
				Notice that $u$ stands for the term $6x+y-1$ and thus facilitates the decrement operation.
	\end{description}	

	\item Encoding of the instruction $j:\; \texttt{test\&dec($C_2$,$\ell$)}$:
		\begin{description}
			\item The case of $C_2$ storing $0$:
				\begin{align*} 
					\forall x y z.\; &\phiChunk(x) \;\wedge\; 2x \leq y \;\wedge\; y \leq 3x \;\wedge\; \phiEnd(y) \;\wedge\; 3x \leq z \;\wedge\; z \leq 4x \;\wedge\; \phiEnd(z) \\
					&\wedge\; \chi_j(x) \;\wedge\;z = 3x+3 \\
					&\;\;\longrightarrow\;\; \phiEnd(6x+y) \;\wedge\; \phiEnd(9x+z) \;\wedge\; \chi_{\ell}(4x)
				\end{align*}

			\item The case of $C_2$ storing a value greater than $0$:
				\begin{align*} 
					\forall x y z u.\; &\phiChunk(x) \;\wedge\; 2x \leq y \;\wedge\; y \leq 3x \;\wedge\; \phiEnd(y) \;\wedge\; 3x \leq z \;\wedge\; z \leq 4x \;\wedge\; \phiEnd(z) \\
					&\wedge\; \chi_j(x) \;\wedge\; z > 3x+3 \;\wedge\; u+1 = 9x+z \\
					&\;\;\longrightarrow\;\; \phiEnd(6x+y) \;\wedge\; \phiEnd(u) \;\wedge\; \chi_{j+1}(4x)
				\end{align*}				
	\end{description}	
		
	\item Encoding of the instruction $j:\; \texttt{halt}$:
		\begin{align*} 
			\forall x y z.\; &\phiChunk(x) \;\wedge\; 2x \leq y \;\wedge\; y \leq 3x \;\wedge\; \phiEnd(y) \;\wedge\; 3x \leq z \;\wedge\; z \leq 4x \;\wedge\; \phiEnd(z) \;\wedge\; \chi_j(x) \\
			&\;\;\longrightarrow\;\; \phiEnd(6x+y) \;\wedge\; \phiEnd(9x+z) \;\wedge\; \chi_{K}(4x)
		\end{align*}				
		The consequent of the implication ensures that the counters remain unchanged and that the computation continues at program line $K$.
		Since we assumed the $K$-th program line to contain the instruction \texttt{halt}, the rest of the bit sequence will repeat the same chunk structure again and again, as the counter values will remain unchanged and the encoded program line will also repeat indefinitely.		
\end{description}
		
Finally, we pose the central question concerning the halting behavior of the machine: Does the machine ever reach a program line containing the \texttt{halt} instruction?
The question is posed as a requirement in a negative fashion:
	\[ \varphi_4 :=\;\; \forall x.\; \phiChunk(x) \;\;\longrightarrow\;\;\neg \chi_K(x) ~.\]
Technically speaking, we require that the machine never reaches the $K$-th program line. Due to the encoding of the \texttt{halt} instruction in arbitrary program lines $j$, it is clear that whenever any program line containing \texttt{halt} is reached the $K$-th program line is reached in the subsequent step. Hence, the above formula is satisfied if and only if the machine will never reach the instruction \texttt{halt} when started on the given input.
		
\begin{lemma}\label{lemma:SimulationOfComputations}
	The two-counter machine $\cM$ with $K+1$ program lines, labeled $0, \ldots, K$, started on input $\<m,n\>$ eventually reaches a program line containing the instruction \texttt{halt} if and only if the described encoding of $\cM$ in Presburger arithmetic with an additional uninterpreted unary predicate symbol is unsatisfiable.
\end{lemma}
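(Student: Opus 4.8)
\emph{Proof plan.}
Write $\Phi$ for the conjunction of all the sentences constructed above --- that is, $\varphi_1$, $\varphi_2$, $\varphi_3$, the control-flow formulas (one per program line of $\cM$), and $\varphi_4$. The plan is to prove the biconditional by establishing its two directions separately. For ``$\cM$ halts $\Rightarrow$ $\Phi$ unsatisfiable'' I would argue directly: every model of $\Phi$ is forced to encode the actual run of $\cM$, and halting then contradicts $\varphi_4$. For ``$\cM$ does not halt $\Rightarrow$ $\Phi$ satisfiable'' I would construct the canonical model from the (necessarily infinite) run.

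\emph{The satisfiability direction.} Since $\cM$ is deterministic and never reaches a \texttt{halt} line, it has a unique infinite run $\gamma_0,\gamma_1,\gamma_2,\ldots$ with $\gamma_t = \<\ell_t,c_{1,t},c_{2,t}\>$ and $\gamma_0 = \<0,m,n\>$. I would define the interpretation $\cI$ over $\Nat$ by taking $P^\cI$ to be the bit string of Section~\ref{section:Encoding}: all positions below $d$ carry $0$, and for each $t \ge 0$ the $t$-th chunk occupies $[4^t d,\,4^{t+1}d)$ and reads $001011\,1^{\ell_t}0\cdots0\,0011\,1^{c_{1,t}}0\cdots0\,0011\,1^{c_{2,t}}0\cdots0$, each of its three subchunks having length $4^t d$. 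One first checks that this layout is well defined: by induction on $t$ we get $4^t d \ge m+t+4$ and $4^t d \ge n+t+4$, while $4^t d \ge d \ge K+6$, and since a counter changes by at most $1$ per step we have $c_{1,t} \le m+t$ and $c_{2,t} \le n+t$, so every subchunk has room for the program line and the counter value stored in it. It then only remains to verify conjunct by conjunct that $\cI \models \Phi$, which is a direct (if tedious) inspection of each formula against the layout; for $\varphi_4$ one uses that $\ell_t \ne K$ for all $t$ and that in $\cI$ the formula $\chi_K$ holds at a chunk exactly when that chunk encodes program line $K$.

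\emph{The unsatisfiability direction.} Assume $\cI \models \Phi$ for some $\cI$ with $P^\cI \subseteq \Nat$; I would derive a contradiction. First I would show that $\varphi_1$ rigidly pins down the skeleton of $P^\cI$. Formulas~\eqref{formula:I}--\eqref{formula:IV} force $\{x : \phiChunk(x)\text{ holds}\}$ to be exactly $\{4^t d : t \ge 0\}$: \eqref{formula:I} and \eqref{formula:III} put a $001011$-delimiter at each $4^t d$, \eqref{formula:II} excludes one below $d$, and \eqref{formula:IV} excludes a further one strictly between $x$ and $4x$. Given that, Formulas~\eqref{formula:V}--\eqref{formula:VII} force the $0011$-delimiters inside a chunk $[x,4x)$ (with $x = 4^t d$) to lie exactly at $2x$ and $3x$, and Formula~\eqref{formula:VIII} forces the bits following each delimiter inside the chunk to form a pattern $1^*0^*$, since any substring $01$ strictly to the right of $x+5$ inside the chunk must be the inner part of a $0011$-delimiter and those occur only at $2x$ and $3x$. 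Hence each chunk encodes a triple $(\ell_t,c_{1,t},c_{2,t})$, read off from the unique $10$-substrings, and $\varphi_3$ moreover forces $\ell_t \le K$, so $\chi_j$ holds at the $t$-th chunk exactly for $j = \ell_t$. Next I would prove by induction on $t$ that $(\ell_t,c_{1,t},c_{2,t})$ equals the $t$-th configuration of the run of $\cM$ on $\<m,n\>$: the base case is exactly $\varphi_2$, and in the step, since $\ell_t \le K$ the formula $\chi_{\ell_t}$ holds at the $t$-th chunk, so the unique control-flow formula for the instruction in line $\ell_t$ applies (its premises hold because the induction hypothesis gives $c_{1,t} \le m+t$ and $c_{2,t} \le n+t$, which by the size estimate above places the relevant $10$-substrings within $[2x,3x]$ and $[3x,4x]$), and its conclusion forces the $(t+1)$-st chunk to encode the $\cM$-successor of $\gamma_t$ --- the guards $y = 2x+3$ versus $y > 2x+3$ correctly splitting counter value $0$ off from a positive value, and the auxiliary variable ``$u$ with $u+1 = \cdots$'' realising the decrement unambiguously over $\Nat$. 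Finally, since $\cM$ halts, its run reaches at some step $s$ a line $j$ containing \texttt{halt}; then $\chi_K$ holds at the $s$-th chunk if $j = K$, and otherwise the \texttt{halt}-instruction's control-flow formula forces $\chi_K$ at the $(s+1)$-st chunk. Either way some position $x = 4^t d$ satisfies $\phiChunk(x) \wedge \chi_K(x)$ in $\cI$, contradicting $\varphi_4$.

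\emph{Expected main obstacle.} The simulation idea --- a deterministic run mirrored by a chain of geometrically growing chunks --- is transparent, and the control-flow induction is routine once the skeleton is fixed. The delicate part will be the rigidity argument in the unsatisfiability direction: carefully checking that Formulas~\eqref{formula:I}--\eqref{formula:VIII} leave no room for spurious delimiters and do pin down the $1^*0^*$ shape of every subchunk, so that every model is forced to reproduce the canonical layout exactly. Closely tied to it, and easy to underestimate, is the arithmetic bookkeeping that the chunk width $4^t d$ always dominates the occurring counter values; this is precisely what keeps the premises of the control-flow formulas satisfied and what makes the canonical model well defined in the first place.
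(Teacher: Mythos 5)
Your proposal is correct and follows essentially the same route as the paper's own proof sketch: establish that $\varphi_1$ rigidly forces the chunk/subchunk skeleton (the paper's properties (a)--(g) correspond to your delimiter-placement and $1^*0^*$ rigidity argument), show by induction that every model encodes the unique run of $\cM$ while the canonical interpretation built from that run is a model, and then observe that $\varphi_4$ is violated exactly when the run reaches program line $K$. The only cosmetic difference is that you split the biconditional into two directions up front, whereas the paper first proves that $\varphi_1\wedge\varphi_2\wedge\varphi_3\wedge\varphi_\cM$ is satisfiable with all models representing the run and only then brings in $\varphi_4$; the content is the same.
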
		
\begin{proof}[Proof sketch]
	We first observe the following technical properties of every interpretation $\cI$ with $\cI \models \varphi_1$.
	\begin{enumerate}[label=(\alph{*}), ref=(\alph{*})]
		\item For every integer $r \in \Nat$ we have $\cI, [x \Mapsto r] \models \phiChunk(x)$ if and only if $r = 4^i d$ for some $i \in \Nat$.
		\item For every integer $r \in \Nat$ we have $\cI, [x \Mapsto r] \models \phiBreak(x)$ if and only if $r = 2\cdot4^i d$ or $r = 3\cdot4^i d$ for some $i \in \Nat$.
		\item For every integer $r \in \Nat$ we have $\cI, [x \Mapsto r] \models \phiStart(x)$ if and only if 
			\[ r \in \bigcup_{i \in \Nat} \bigl\{ 4^i d + 1, 4^i d + 3, 2\cdot 4^i d + 1, 3\cdot 4^i d + 1 \bigr\} ~.\]
		\item Suppose there are integers $i,r,q \in \Nat$ such that $4^i d + 5 \leq r, q < 2\cdot 4^i d$. If we have $\cI, [x \Mapsto r] \models \phiEnd(x)$ and $\cI, [x \Mapsto q] \models \phiEnd(x)$, then it follows that $r = q$. 
		\item Suppose there are integers $i,r,q \in \Nat$ such that $2\cdot 4^i d + 3 \leq r, q < 3\cdot 4^i d$. If we have $\cI, [x \Mapsto r] \models \phiEnd(x)$ and $\cI, [x \Mapsto q] \models \phiEnd(x)$, then it follows that $r = q$. 
		\item Suppose there are integers $i,r,q \in \Nat$ such that $3\cdot 4^i d + 3 \leq r, q < 4\cdot 4^i d$. If we have $\cI, [x \Mapsto r] \models \phiEnd(x)$ and $\cI, [x \Mapsto q] \models \phiEnd(x)$, then it follows that $r = q$. 
		\item For every integer $i \in \Nat$ there are integers $r_1, r_2, r_3 \in \Nat$ such that 
			\begin{itemize}
				\item $4^i d + 5 \leq r_1 < 2\cdot 4^i d$ and $\cI, [x \Mapsto r_1] \models \phiEnd(x)$,
				\item $2\cdot 4^i d + 3 \leq r_2 < 3\cdot 4^i d$ and $\cI, [x \Mapsto r_2] \models \phiEnd(x)$, and
				\item $3\cdot 4^i d + 3 \leq r_3 < 4\cdot 4^i d$ and $\cI, [x \Mapsto r_3] \models \phiEnd(x)$.
			\end{itemize}
	\end{enumerate}
	Due to the above observations, it is clear that any model $\cI$ of $\varphi_1$ interprets $P$ in such a way that it uniquely represents an infinite sequence of triples of nonnegative integers encoded in unary, just as we have described it earlier.
	If, in addition, $\cI$ satisfies $\varphi_2$ and $\varphi_3$, then the first triple of the sequence has the form $\<0,m,n\>$ and the first component of every triple in the sequence does not exceed $K$.
	
	Given the program of $\cM$, we denote by $\varphi_\cM$ the sentence that encodes $\cM$'s program in accordance with the described formula schemes.
	Hence, for any model $\cI \models \varphi_1 \wedge \varphi_2 \wedge \varphi_3 \wedge \varphi_\cM$ the interpretation $P^\cI$ of $P$ does not only represent a sequence of triples of integers but also establishes relations between the triples in the sequence, such that they mimic the operations that $\cM$ would perform on its counters in accordance with the instructions in its program.
	The only technical difference is that whenever $\cM$ enters a configuration $\<\ell, c_1, c_2\>$ such that instruction $\ell$ in $\cM$'s program is \texttt{halt}, then all later configurations have the form $\<K, c_1, c_2\>$.
	All in all, $P^\cI$ is a faithful encoding of some run of $\cM$ starting from the input $\<m,n\>$.
	
	On the other hand, since $\cM$ is deterministic, there is a unique sequence 
		\[ \tau := \<0,m,n\>\<\ell_1, c_{1,1}, c_{2,1}\>\<\ell_2, c_{1,2}, c_{2,2}\>\<\ell_3, c_{1,3}, c_{2,3}\>\ldots \]
	of configurations that represents the \emph{run of $\cM$ started on input $\<m,n\>$}. If $\tau$ is finite and thus contains a halting configuration $\<\ell, c_1, c_2\>$ as its last triple, we concatenate the infinite sequence $\<K,c_1,c_2\>\<K,c_1,c_2\> \ldots$ and thus obtain an infinite sequence again.
	This infinite sequence (be it originally infinite or made so artificially) can be translated into an interpretation $\cI_\tau$ such that $\cI_\tau \models \varphi_1 \wedge \varphi_2 \wedge \varphi_3 \wedge \varphi_\cM$.
	
	So far, we have seen that $\varphi_1 \wedge \varphi_2 \wedge \varphi_3 \wedge \varphi_\cM$ is satisfiable and that every model represents the unique run $\tau$ of $\cM$ started on input $\<m,n\>$.
	Clearly, we now observe for any model $\cI \models \varphi_1 \wedge \varphi_2 \wedge \varphi_3 \wedge \varphi_\cM$ that $\cI \models \varphi_4$ holds if and only if $\tau$ \emph{does not} contain a triple $\<K, c_1, c_2\>$ for any $c_1, c_2 \in \Nat$. Hence, $\varphi_1 \wedge \varphi_2 \wedge \varphi_3 \wedge \varphi_\cM \wedge \varphi_4$ is unsatisfiable if and only if $\cM$ reaches the \texttt{halt} instruction when started on the input $\<m,n\>$.
\end{proof}
Together with the fact that the halting problem for two-counter machines is undecidable (cf.\ Proposition~\ref{proposition:UndecidabilityTwoCounterMachines}), we have shown the following theorem.
\begin{theorem}\label{theorem:UndecidabilityOverIntegers}
	(Un)satisfiability of the universal fragment of Presburger arithmetic with a single additional uninterpreted unary predicate symbol is undecidable.
\end{theorem}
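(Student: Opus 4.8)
The plan is to read the construction of Section~\ref{section:FirstEncoding} as a computable many-one reduction from the halting problem for two-counter machines to the unsatisfiability problem for the universal fragment. Given a two-counter machine $\cM$ with program lines $0, \ldots, K$ and an input pair $\langle m, n\rangle \in \Nat \times \Nat$, I would form the sentence
\[ \varphi \;:=\; \varphi_1 \;\wedge\; \varphi_2 \;\wedge\; \varphi_3 \;\wedge\; \varphi_\cM \;\wedge\; \varphi_4, \]
where $\varphi_\cM$ collects the instruction-encoding formulas for all $K+1$ program lines of $\cM$. The first step is to note that the map $(\cM, \langle m,n\rangle) \mapsto \varphi$ is computable: the constant $d = \max\{K+6, m+4, n+4\}$ is obtained by a trivial calculation, and every conjunct is built from a fixed finite template by substituting $d$, $K$, and the finitely many line labels occurring in the program of $\cM$.

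The second step is to check that $\varphi$ genuinely lies in the universal fragment of Presburger arithmetic extended by a single uninterpreted unary predicate $P$. Each conjunct is in prenex normal form with only universal first-order quantifiers; the only atoms with a hidden existential quantifier are the occurrences of $\le$ and $<$, which are precisely the abbreviations exempted in Section~\ref{section:basicDefinitionsPresburger} (and, as explained there, each such atom can if one insists be replaced by a negated strict inequality and then unfolded into an honestly universal formula). Since a conjunction of universal sentences is, after renaming bound variables apart and pulling the $\forall$-quantifiers to the front, equivalent to a single universal sentence, $\varphi$ belongs to the fragment. With that in hand, Lemma~\ref{lemma:SimulationOfComputations} yields that $\varphi$ is unsatisfiable exactly when $\cM$ started on $\langle m,n\rangle$ reaches a \texttt{halt} instruction; composing a hypothetical decision procedure for (un)satisfiability of the fragment with the reduction would therefore decide the halting problem for two-counter machines, contradicting Proposition~\ref{proposition:UndecidabilityTwoCounterMachines}.

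Essentially all of the substance sits in Lemma~\ref{lemma:SimulationOfComputations}, which has already been established, so the only residual obstacle here is a careful bookkeeping argument for the second step: one has to be sure that none of the $\le$/$<$ occurrences inside $\varphi_1,\ldots,\varphi_4$ or inside the control-flow formulas introduces existential quantification beyond what the definition of the fragment allows, and that the auxiliary variables (such as the variable $u$ used to express terms of the form $y-1$) are universally quantified. Once this routine verification is complete, the theorem follows immediately.
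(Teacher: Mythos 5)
Your proposal is correct and follows exactly the paper's route: the theorem is obtained by reading the encoding of Section~\ref{section:FirstEncoding} as a computable reduction, invoking Lemma~\ref{lemma:SimulationOfComputations} for its correctness, and combining this with Proposition~\ref{proposition:UndecidabilityTwoCounterMachines}. Your additional bookkeeping remarks about membership in the universal fragment (the exemption for $\le$ and $<$, and the universal quantification of auxiliary variables such as $u$) are exactly the points the paper addresses in Section~\ref{section:basicDefinitionsPresburger}.
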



An alternative proof of Lemma~\ref{lemma:SimulationOfComputations} uses \emph{hierarchic superposition}~\cite{Bachmair1994b,Althaus2009,Kruglov2012,Baumgartner2013}.
Transforming the two-counter machine encoding $\varphi_1 \wedge \varphi_2 \wedge \varphi_3 \wedge \varphi_\cM \wedge \varphi_4$ 
into \emph{conjunctive normal form (CNF)} results in a first-order clause set $\Phi$ over Presburger arithmetic without any uninterpreted function or
constant symbols. The only uninterpreted symbol in $\Phi$ is the predicate $P$. Therefore, hierarchic superposition is \emph{refutationally complete} for this clause set.
That is, $\Phi$ unsatisfiable if and only if hierarchic superposition derives a contradiction, the empty clause, out of the clauses
in $\Phi$. Hierarchic superposition first resolves away all $P$ literals from a clause and in case the clause only
contains arithmetic literals, it checks their unsatisfiability.

By an inductive argument,
hierarchic superposition derives exactly the states of the two-counter machine via ground literals of the form $[\neg]P(k)$, $k\in \Nat$.
Let $\PhiChunk(s)$, $\PhiEnd(s)$, and $\Chi_j(s)$ be the sets of clauses---unit clauses in this case---that correspond to the formulas $\phiChunk(s)$, $\phiEnd(s)$, $\chi_j(s)$, respectively, for any term $s$.

Suppose that the ground clauses in $\PhiChunk(k)$, $\Chi_j(k)$, $\PhiEnd(2k+3+c_1)$, and
$\PhiEnd(3k+3+c_2)$ (with $2k+3+c_1 < 3k$ and $3k+3+c_2 < 4k$) have been derived already. They represent the two-counter machine at program line $j$ with counter values $c_1$ and $c_2$. Without loss of generality, we
assume that the instruction of the machine at line $j$ is an increment on the second counter. The other operations are argued analogously.
Consider the clauses that result from the formula
\begin{align*} 
  \forall x y z.\; &\phiChunk(x) \;\wedge\; 2x \leq y \;\wedge\; y \leq 3x \;\wedge\; \phiEnd(y) \;\wedge\; 3x \leq z \;\wedge\; z \leq 4x \;\wedge\; \phiEnd(z) \;\wedge\; \chi_j(x) \\
		&\;\;\longrightarrow\;\; \phiEnd(6x+y) \;\wedge\; \phiEnd(9x+z+1) \;\wedge\; \chi_{j+1}(4x) ~.
\end{align*}
All literals in $\phiChunk(x)$, $\phiEnd(y)$, $\phiEnd(z)$, and $\chi_j(x)$ can be resolved away via superposition and the substitution $\sigma = \{x\mapsto k, y\mapsto 2k+3+c_1, z\mapsto 3k+3+c_2\}$, thus
generating the new ground clauses in $\PhiEnd(6x+y)\sigma$, $\PhiEnd(9x+z+1)\sigma$, $\Chi_{j+1}(4x)\sigma$, which represent the next state of the two-counter machine.
The underlying strategy always selects all negative literals resulting out of $\phiChunk(x)$, $\phiEnd(y)$, $\phiEnd(z)$, and $\chi_j(x)$. Resolving those away with the respective
ground unit literals, fixes already the values for $x$, $y$, and $z$. Thus all other $P$ literals resulting from the delimiter encoding can then be reduced away by
subsumption resolution.
The ground literals in $\PhiChunk(4x)\sigma$ are generated by this strategy with the clauses resulting from (\ref{formula:III}). This finishes the proof that hierarchic superposition
generates the states of the two-counter machine by the derivation of respective $P$ ground literals. 

However, it remains to be shown that there are no derivations with other clauses that might lead to a contradiction.
Any clause resulting from the encoding of a two-counter machine instruction contains the literals in $\Chi_\ell(x)$ for the
respective line of the program $\ell$. Now if the encoding of the instruction of line $\ell$ is resolved with ground $P$ literals
representing a line $j\neq \ell$ via the above selection strategy, then all resulting clauses turn into tautologies as $\chi_\ell(x)$ does not hold for the respective
positions as a result of the different delimiter bit sequences that cannot be confused.
Hence all these inferences become redundant and can be neglected. This is a consequence of the clauses resulting from (\ref{formula:I}) to (\ref{formula:VIII}).
So given ground literals $\phiChunk(k)$, $\chi_j(k)$, $\phiEnd(2k+3+c_1)$,
 $\phiEnd(3k+3+c_2)$, there is exactly one line encoding formula that can be resolved with that does not lead to tautologies:
the encoding of program line $j$.

\subsection{Reducing the number of variables}\label{section:ReducedEncoding}
	
We can formulate the encoding with at most two variables per formula, if we are willing to incorporate $\leq$ into the signature of Presburger arithmetic rather than conceiving expressions of the form $s \leq t$ as an abbreviation of $\exists z. s + z = t$. If we accept this extended signature, we still have to modify the encoding formulas a little bit.
As a matter of fact, the criterion is already met by most of the formulas in the previous subsection.
Only Subformula~(\ref{formula:VIII}) of $\varphi_1$ and the encodings of the program instructions have to be modified as follows.
\begin{description}			
	\item Modified variant of Subformula~(\ref{formula:VIII}):
		\[ \forall x y.\; \phiChunk(x) \;\wedge\; \phiStart(y+1) \;\wedge\; x+5 < y+1 \;\wedge\; y+1 < 4x \;\;\longrightarrow\;\; \phiBreak(y) \]
		In this variant of the subformula we get rid of the variable $u$ that we have introduced in order to simulate subtraction.
		We do so by using $y+1$ in the premise rather than $y$. In this way subtraction by one in the consequent amounts to leaving away the $+1$.
		We will reuse this pattern in the encodings of the decrement operation later.
	
	\item Modified encoding of the instruction $j:\; \texttt{inc($C_1$)}$:
		\begin{align*} 
			\forall x y.\; &\phiChunk(x) \;\wedge\; 2x \leq y \;\wedge\; y \leq 3x \;\wedge\; \phiEnd(y) \;\wedge\; \chi_j(x) \;\;\longrightarrow\;\; \phiEnd(6x+y+1) \;\wedge\; \chi_{j+1}(4x)\\[2ex]
			\forall x z.\; &\phiChunk(x) \;\wedge\; 3x \leq z \;\wedge\; z \leq 4x \;\wedge\; \phiEnd(z) \;\wedge\; \chi_j(x) \;\;\longrightarrow\;\; \phiEnd(9x + z)
		\end{align*}				
		For this instruction and most of the others we split the encoding formula into two parts: the first formula realizes the $y$-part of the original encoding and the second formula realizes the $z$-part.

	\item Modified encoding of the instruction $j:\; \texttt{inc($C_2$)}$:
		\begin{align*} 
			\forall x y.\; &\phiChunk(x) \;\wedge\; 2x \leq y \;\wedge\; y \leq 3x \;\wedge\; \phiEnd(y) \;\wedge\; \chi_j(x) \;\;\longrightarrow\;\; \phiEnd(6x+y) \;\wedge\; \chi_{j+1}(4x)\\[2ex]
			\forall x z.\; &\phiChunk(x) \;\wedge\; 3x \leq z \;\wedge\; z \leq 4x \;\wedge\; \phiEnd(z) \;\wedge\; \chi_j(x) \;\;\longrightarrow\;\; \phiEnd(9x+z+1)
		\end{align*}				
		
	\item Modified encoding of the instruction $j:\; \texttt{test\&dec($C_1$,$\ell$)}$:
		\begin{description}
			\item The case of $C_1$ storing $0$:
				\begin{align*} 
					\forall x z.\; &\phiChunk(x) \;\wedge\; 3x \leq z \;\wedge\; z \leq 4x \;\wedge\; \phiEnd(z) \;\wedge\; \chi_j(x) \;\wedge\; \phiEnd(2x+3) \\
					&\;\;\longrightarrow\;\; \phiEnd(8x+3) \;\wedge\; \phiEnd(9x+z) \;\wedge\; \chi_{\ell}(4x)
				\end{align*}				
				The subformula $\phiEnd(2x+3)$ in the premise ensures that the counter $C_1$ currently stores a $0$ and the subformula $\phiEnd(8x+3)$ requires that $C_1$ still stores $0$ in the next step. Notice that we do not need a variable $y$ to address the corresponding bit positions, since we can directly calculate these positions from $x$.
				
			\item The case of $C_1$ storing a value greater than $0$:
				\begin{align*} 
					\forall x y.\; &\phiChunk(x) \;\wedge\; 2x \leq y+1 \;\wedge\; y+1 \leq 3x \;\wedge\; \phiEnd(y+1) \;\wedge\; \chi_j(x) \;\wedge\; y+1 > 2x+3 \\
						&\longrightarrow\;\; \phiEnd(6x+y) \;\wedge\; \chi_{j+1}(4x)\\[2ex]
					\forall x z.\; &\phiChunk(x) \;\wedge\; 3x \leq z \;\wedge\; z \leq 4x \;\wedge\; \phiEnd(z) \;\wedge\; \chi_j(x) \;\wedge\; \neg \phiEnd(2x+3) \\
					&\;\;\longrightarrow\;\; \phiEnd(9x+z)
				\end{align*}		
				In the first formula $y+1 > 2x+3$ ensures that the value of $C_1$ is greater than zero.
				We need the variable $y$ to refer to $C_1$'s value for the decrement operation.
				In the second formula $C_1$'s exact value is not important and thus $\neg \phiEnd(2x+3)$ is sufficient for ensuring that $C_1$'s value is strictly positive.
		\end{description}	

	\item Modified encoding of the instruction $j:\; \texttt{test\&dec($C_2$,$\ell$)}$:
		\begin{description}
			\item The case of $C_2$ storing $0$:
				\begin{align*} 
					\forall x y.\; &\phiChunk(x) \;\wedge\; 2x \leq y \;\wedge\; y \leq 3x \;\wedge\; \phiEnd(y) \;\wedge\; \chi_j(x) \;\wedge\; \phiEnd(3x+3) \\
					&\;\;\longrightarrow\;\; \phiEnd(6x+y) \;\wedge\; \phiEnd(12x+3) \;\wedge\; \chi_{\ell}(4x)
				\end{align*}				

			\item The case of $C_2$ storing a value greater than $0$:
				\begin{align*} 
					\forall x y.\; &\phiChunk(x) \;\wedge\; 2x \leq y \;\wedge\; y \leq 3x \;\wedge\; \phiEnd(y) \;\wedge\; \chi_j(x) \;\wedge\; \neg \phiEnd(3x+3) \\
					&\;\;\longrightarrow\;\; \phiEnd(6x+y) \;\wedge\; \chi_{j+1}(4x)\\[2ex]
					\forall x z.\; &\phiChunk(x) \;\wedge\; 3x \leq z+1 \;\wedge\; z+1 \leq 4x \;\wedge\; \phiEnd(z+1) \;\wedge\; \chi_j(x) \;\wedge\; z+1 > 3x+3 \\
					&\longrightarrow\;\; \phiEnd(9x+z)
				\end{align*}				
		\end{description}	

	\item Modified encoding of the instruction $j:\; \texttt{halt}$:
		\begin{align*} 
			\forall x y.\; &\phiChunk(x) \;\wedge\; 2x \leq y \;\wedge\; y \leq 3x \;\wedge\; \phiEnd(y) \;\wedge\; \chi_j(x) \;\;\longrightarrow\;\; \phiEnd(6x+y) \;\wedge\; \chi_{K}(4x) \\[2ex]
			\forall x z.\; &\phiChunk(x) \;\wedge\; 3x \leq z \;\wedge\; z \leq 4x \;\wedge\; \phiEnd(z) \;\wedge\; \chi_j(x) \;\;\longrightarrow\;\; \phiEnd(9x+z)
		\end{align*}
\end{description}

\begin{theorem}\label{theorem:UndecidabilityOverIntegersWithReducedVariables}
	Extend the standard signature $\<0,1,+\>$ of Presburger arithmetic by the relation $\leq$ and an uninterpreted unary predicate symbol $P$ to obtain $\<0,1,+,\leq,P\>$.	
	(Un)satisfiability of the universal fragment of Presburger arithmetic over this extended signature is undecidable, if we allow at least two variables per clause.
\end{theorem}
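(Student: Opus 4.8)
The plan is to mimic the proof of Theorem~\ref{theorem:UndecidabilityOverIntegers} step for step, i.e.\ to reduce the halting problem for two-counter machines (Proposition~\ref{proposition:UndecidabilityTwoCounterMachines}) to (un)satisfiability of the reduced encoding. Given a two-counter machine $\cM$ with $K+1$ program lines and an input $\<m,n\>$, I would let $\varphi_1'$ be the sentence $\varphi_1$ with Subformula~(\ref{formula:VIII}) replaced by its modified variant, let $\varphi_\cM'$ be the conjunction of the modified instruction encodings, and keep $\varphi_2$, $\varphi_3$, $\varphi_4$ unchanged. The goal is then the analogue of Lemma~\ref{lemma:SimulationOfComputations}: the sentence $\varphi_1' \wedge \varphi_2 \wedge \varphi_3 \wedge \varphi_\cM' \wedge \varphi_4$ is unsatisfiable if and only if $\cM$ halts on $\<m,n\>$, together with the purely syntactic observation that, once $\leq$ belongs to the signature, each conjunct contains at most two variables (whereas the original encoding used up to four, as in $\forall x y z u$).

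For the syntactic claim I would just inspect the formulas. Subformulas~(\ref{formula:I})--(\ref{formula:VII}) of $\varphi_1$ and the sentences $\varphi_2$, $\varphi_3$, $\varphi_4$ already mention at most two first-order variables, and the only bound variable that the abbreviation $s \leq t$ would have introduced is now absent because $\leq$ is interpreted directly. The modified Subformula~(\ref{formula:VIII}) uses only $x$ and $y$: it trades the auxiliary variable $u$, which stood for $y-1$, for the device of writing $y+1$ in the premise so that ``subtract one'' becomes ``omit the $+1$''. Each modified instruction encoding is a conjunction of one or two formulas, every one of which quantifies over $x$ together with exactly one of $y$, $z$; passing to CNF preserves this bound clause by clause.

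For the semantic claim I would show that the reduced encoding, restricted to the interpretation of $P$, has exactly the same models as the original, so that the reasoning in the proof of Lemma~\ref{lemma:SimulationOfComputations} transfers unchanged. First, the modified Subformula~(\ref{formula:VIII}) is logically equivalent to the original (renaming $y$ to $y+1$ and discarding the redundant $u$ is sound), so properties (a)--(g) from that proof still hold in every model of $\varphi_1'$. Consequently, in such a model the subchunk storing $C_1$ in the chunk that starts at position $x$ contains a unique end-marker, located at position $2x+3+c_1$; hence $\phiEnd(2x+3)$ holds iff $c_1 = 0$, and symmetrically $\phiEnd(3x+3)$ holds iff $c_2 = 0$. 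This is exactly the fact that licenses replacing, in the split \texttt{test\&dec} encodings, the explicit guard $y > 2x+3$ (resp.\ $z > 3x+3$) by $\neg \phiEnd(2x+3)$ (resp.\ $\neg \phiEnd(3x+3)$) in the one formula that does not otherwise refer to the counter's value. It then remains to check, for each of the five instruction types, that the split formulas jointly assert the existence of precisely the same end-markers and $\chi$-markers in the successor chunk (starting at $4x$) as the monolithic formula did --- the relevant positions $6x+y$, $6x+y+1$, $9x+z$, $9x+z+1$, $8x+3$, $12x+3$ are literally the same terms --- while the delimiter skeleton of that successor chunk is already forced by Subformula~(\ref{formula:III}). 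With this in hand, $\varphi_1' \wedge \varphi_2 \wedge \varphi_3 \wedge \varphi_\cM'$ is satisfiable and all of its models faithfully represent the run of $\cM$ on $\<m,n\>$, and the equivalence with halting follows as in Lemma~\ref{lemma:SimulationOfComputations}.

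I expect the main obstacle to be the bookkeeping for the split \texttt{test\&dec} encodings: one must verify that the two formulas into which such an instruction is broken fire under jointly exhaustive and mutually consistent conditions, so that every reachable chunk receives a complete set of markers in its successor ($C_1$-end, $C_2$-end, and $\chi_{j+1}$ or $\chi_\ell$) and never a partial one, which is what keeps the induced sequence of triples well-defined. This is routine given properties (a)--(g), but it has to be carried out instruction by instruction, paying attention to the shifted premises involving $y+1$ and $z+1$.
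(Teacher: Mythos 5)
Your proposal matches the paper's own treatment: Section~\ref{section:ReducedEncoding} proves the theorem precisely by presenting the modified Subformula~(\ref{formula:VIII}) (trading the auxiliary variable $u$ for the ``$y+1$ in the premise'' device), splitting each instruction encoding into an $(x,y)$- and an $(x,z)$-formula, and using $\phiEnd(2x+3)$ resp.\ $\neg\phiEnd(2x+3)$ as the zero/nonzero test for $C_1$ (symmetrically for $C_2$), with correctness inherited from Lemma~\ref{lemma:SimulationOfComputations}. Your verification plan, including the equivalence of the modified Subformula~(\ref{formula:VIII}) and the bookkeeping for the split \texttt{test\&dec} cases, is exactly the (largely implicit) argument the paper relies on, so nothing further is needed.
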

To the authors' knowledge, the case of a single variable per clause is not known to be\linebreak (un)decidable.
From Downey's encoding \cite{Downey1972} it follows that one variable is sufficient, if we further extend the signature by modulo operators $\text{mod } k$ for finitely many integer constants $k$.

\subsection{Using the reals as underlying domain}

Presburger arithmetic is defined on the natural numbers and we have shown that adding a unary uninterpreted predicate symbol leads to undecidability.
It is known that validity and satisfiability over real numbers exhibits a different behavior when decidability is concerned.

In the context of the reals, we consider $\leq$ and $<$ to be part of the signature and not abbreviations for more complicated terms.
Of course, we assume that they have the standard semantics over the reals.

We can directly use the encoding that we have presented for the integers in order to show undecidability over the real domain.
The crucial point is that we have encoded the reachability of the \texttt{halt} instruction in a negative fashion. 
If the machine $\cM$ reaches a \texttt{halt} instruction, then we cannot find a model of the encoding formula set, since any interpretation that faithfully represents the run of $\cM$ on the given input must violate the condition $\neg \chi_K(n)$ for some integer $n$ for which $\phiChunk(n)$ is true. 
We have used this observation to prove Lemma~\ref{lemma:SimulationOfComputations}.
The described conflict does not vanish when we assume a larger domain.
If, on the other hand, the machine $\cM$ does not reach a \texttt{halt} instruction, then there is a model of the encoding formula set. 
In particular, there is a model in which $P$ is interpreted such that it exclusively contains integers and no reals at all.
Hence, the fact that we are dealing with an extended domain does not affect the circumstances  under which the encoding formula set is unsatisfiable or not.
Consequently, we have the following undecidability result.

\begin{theorem}\label{theorem:UndecidabilityOverReals}
	(Un)satisfiability of the universal fragment of linear arithmetic over the reals with an additional uninterpreted unary predicate symbol is undecidable.
\end{theorem}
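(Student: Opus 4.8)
The plan is to reuse, unchanged, the encoding $\varphi_1 \wedge \varphi_2 \wedge \varphi_3 \wedge \varphi_\cM \wedge \varphi_4$ constructed in Section~\ref{section:FirstEncoding}, but now to evaluate it over the structure whose domain is $\Real$, in which $0$, $1$, $+$, $\le$ and $<$ have their usual real-number meaning and $P$ ranges over arbitrary subsets of $\Real$. Because $\le$ and $<$ are now genuine symbols of the signature, this sentence still lies in the universal fragment. It then suffices to re-prove the equivalence of Lemma~\ref{lemma:SimulationOfComputations} in this new setting: $\varphi_1 \wedge \varphi_2 \wedge \varphi_3 \wedge \varphi_\cM \wedge \varphi_4$ is unsatisfiable over $\Real$ if and only if $\cM$, started on $\<m,n\>$, reaches a \texttt{halt} instruction. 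Combined with Proposition~\ref{proposition:UndecidabilityTwoCounterMachines}, this gives Theorem~\ref{theorem:UndecidabilityOverReals}.

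For the direction ``$\cM$ does not halt $\Rightarrow$ satisfiable'' I would take the interpretation $\cI_\tau$ constructed in the proof of Lemma~\ref{lemma:SimulationOfComputations} and simply reinterpret it over $\Real$, leaving $P^{\cI_\tau} \subseteq \Nat$ unchanged. The only thing to verify is that enlarging the domain cannot falsify any of the universal axioms. Here the key observation is that a literal $P(t)$ can hold only when $t$ is a nonnegative integer, so every guard $\phiChunk(\cdot)$, $\phiBreak(\cdot)$, $\phiStart(\cdot)$, $\phiEnd(\cdot)$, $\chi_j(\cdot)$ appearing in the premise of an axiom is satisfiable only when the universally quantified variables take integral --- indeed, ``landmark'' --- values; Formula~(\ref{formula:II}) and the explicit bit patterns additionally exclude spurious matches at small or negative integers. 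Hence each axiom is effectively quantified over precisely the integer points it was over $\Nat$, so $\cI_\tau$ still models $\varphi_1 \wedge \varphi_2 \wedge \varphi_3 \wedge \varphi_\cM$, while $\varphi_4$ holds since $\tau$ never reaches line $K$.

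For the converse, ``$\cM$ halts $\Rightarrow$ unsatisfiable over $\Real$'', I would take an arbitrary real model $\cI$ of $\varphi_1 \wedge \varphi_2 \wedge \varphi_3 \wedge \varphi_\cM \wedge \varphi_4$ and redo the technical properties (a)--(g) from the proof of Lemma~\ref{lemma:SimulationOfComputations} over $\Real$. The chain of chunks at $d, 4d, 16d, \dots$ comes from Formula~(\ref{formula:I}) together with the implication $\phiChunk(x) \to \phiChunk(4x)$ contained in Formula~(\ref{formula:III}), by induction on $i \in \Nat$, which does not depend on the domain; the absence of spurious $\phiChunk$-, $\phiBreak$- and $\phiStart$-positions follows from Formulas~(\ref{formula:II}) and (\ref{formula:IV})--(\ref{formula:VIII}), once one notes that $[d,\infty)$ is the disjoint union of the intervals $[4^i d, 4^{i+1} d)$ and that Formula~(\ref{formula:IV}) forbids a second chunk inside any one of them, while Formula~(\ref{formula:II}) takes care of $(-\infty, d)$. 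Thus $\cI$ again faithfully represents the unique run of $\cM$ on $\<m,n\>$, which by assumption reaches program line $K$ at some chunk position $4^i d$ and thereby contradicts $\varphi_4$.

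I expect the main obstacle to be exactly this re-verification: one must make sure that the ``no spurious delimiter'' arguments of Lemma~\ref{lemma:SimulationOfComputations}, originally written for $\Nat$, never implicitly rely on well-ordering or on the absence of elements strictly between consecutive integers, and one must reconcile the two readings of ``$s < t$'' --- the standard strict order over $\Real$ and the abbreviation $s + 1 \le t$ used in the integer encoding. The latter is harmless because, by the $P$-guard observation above, every order comparison in every axiom is only ever evaluated at integer-valued arguments, and on the integers $s < t$ and $s + 1 \le t$ coincide. Once these routine checks are carried out, the equivalence --- and hence Theorem~\ref{theorem:UndecidabilityOverReals} --- follows.
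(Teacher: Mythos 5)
Your proposal is correct and follows essentially the same route as the paper: reuse the integer encoding verbatim, observe that when $\cM$ halts the contradiction is derived by instantiating the universal axioms only at integer points (which survive in the larger domain), and when $\cM$ does not halt the model $\cI_\tau$ with $P^{\cI_\tau} \subseteq \Nat$ remains a model over $\Real$ because every guard containing a positive $P$-literal forces its arguments to be integers. The paper states this more tersely, but your extra checks (the $P$-guard observation and the re-verification of properties (a)--(g) at integer arguments) are exactly the details it leaves implicit.
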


\subsection{Unary function symbols and the Horn fragment}\label{section:UndecidabilityWithFunctionSymbols}

The uninterpreted unary predicate $P$ in our encoding of two-counter machines can be replaced by an uninterpreted unary function $f : \Nat \to \Nat$ over the natural numbers.
We simply add the assertion $\forall x.\, f(x) \leq 1$ and substitute every negative literal $\neg P(t)$ with $f(t) = 0$ and every positive literal $P(t)$ with $f(t) = 1$, where $t$ is any term. (Implicitly, we exploit the fact that $f$ is interpreted by a total function $f^\cI$ in any interpretation $\cI$.)
After this substitution, transforming the encoding formula set from Section~\ref{section:FirstEncoding} into \emph{conjunctive normal norm (CNF)} yields a clause set that is \emph{Horn}, i.e.\ every clause contains at most one positive literal.
By this line of argument we obtain the following theorem.
\begin{theorem}\label{theorem:UndecidabilityWithFunctions}
	(Un)satisfiability of the universal Horn fragment of Presburger arithmetic with a single additional uninterpreted unary function symbol is undecidable.
\end{theorem}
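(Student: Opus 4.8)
The plan is to reuse the two-counter-machine encoding $\varphi_1 \wedge \varphi_2 \wedge \varphi_3 \wedge \varphi_\cM \wedge \varphi_4$ of Section~\ref{section:FirstEncoding} and to eliminate the predicate $P$ in favour of a fresh uninterpreted unary function symbol $f$. Concretely, I would let $\psi_\cM$ be $(\forall x.\, f(x) \leq 1) \wedge \varphi'_1 \wedge \varphi'_2 \wedge \varphi'_3 \wedge \varphi'_\cM \wedge \varphi'_4$, where each $\varphi'_i$ is obtained from $\varphi_i$ by unfolding the abbreviations $\phiChunk, \phiBreak, \phiStart, \phiEnd, \chi_j$ and replacing every positive literal $P(t)$ by the equation $f(t) = 1$ and every negative literal $\neg P(t)$ by the equation $f(t) = 0$. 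The decisive feature of this substitution is that it erases polarity: a $P$-literal of either sign becomes a \emph{positive} $f$-equation, so the only negations surviving in $\psi_\cM$ are those of the encoding's own connectives (the implication arrows and the single leading $\neg$ of $\neg\chi_K(x)$ inside $\varphi_4$).

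First I would establish that $\psi_\cM$ is equisatisfiable with the original encoding. In any model $\cI$ of $\psi_\cM$ the function $f^\cI$ is total and, by the axiom $\forall x.\, f(x) \leq 1$, takes only the values $0$ and $1$; interpreting $P$ by $\{r \in \Nat : f^\cI(r) = 1\}$ then yields a model of $\varphi_1 \wedge \varphi_2 \wedge \varphi_3 \wedge \varphi_\cM \wedge \varphi_4$. Conversely, given a model of that sentence one obtains a model of $\psi_\cM$ by taking $f$ to be the characteristic function of $P^\cI$. Hence, by Lemma~\ref{lemma:SimulationOfComputations}, $\psi_\cM$ is unsatisfiable precisely when $\cM$ halts on input $\<m,n\>$.

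Next I would check that a conjunctive normal form of $\psi_\cM$ is Horn, i.e.\ every clause has at most one positive literal. The axiom $\forall x.\, f(x) \leq 1$ and the ground conjuncts $\phiChunk(d)$ and the conjuncts of $\varphi_2$ contribute positive unit clauses. Every remaining conjunct is, after the substitution, an implication whose antecedent is a conjunction of positive $f$-equations and arithmetic atoms; clausifying it distributes over the consequent and negates the antecedent conjuncts, so each resulting clause consists of negative literals together with one literal taken from the consequent. For all conjuncts except $\varphi_4$ that consequent is a conjunction of positive literals, so each clause has exactly one positive literal; for $\varphi_4$ the consequent is $\neg(f(x+5+K)=1 \wedge f(x+6+K)=0)$, which clausifies to the negative literals $f(x+5+K)\neq1$ and $f(x+6+K)\neq0$, so its clause is purely negative. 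In every case the clause is Horn. (It matters here that $\psi_\cM$ is built from the encoding of Section~\ref{section:FirstEncoding}, which — unlike the variable-reduced encoding of Section~\ref{section:ReducedEncoding} — places no subformula of the form $\neg\phiEnd(\cdot)$ in an antecedent; such a subformula would yield clauses with two positive equations.)

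Putting the pieces together, $\psi_\cM$ in CNF is a sentence of the universal Horn fragment of Presburger arithmetic over $\<0,1,+\>$ extended by the single uninterpreted unary function symbol $f$, and it is unsatisfiable iff $\cM$ halts on $\<m,n\>$; undecidability is then immediate from Proposition~\ref{proposition:UndecidabilityTwoCounterMachines}. I expect the only real obstacle to be the bookkeeping in the Horn check — in particular, making sure that in $\varphi_4$ the substitution is applied to the literals \emph{before} the leading negation of $\neg\chi_K$ is pushed inward, since distributing that negation first would turn $\neg\chi_K(x)$ into a disjunction of two positive equations and destroy the Horn property.
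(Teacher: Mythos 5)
Your proposal is correct and follows essentially the same route as the paper: add the axiom $\forall x.\, f(x) \leq 1$, replace $\neg P(t)$ by $f(t)=0$ and $P(t)$ by $f(t)=1$ in the encoding of Section~\ref{section:FirstEncoding}, and observe that the resulting CNF is Horn, so undecidability follows from Lemma~\ref{lemma:SimulationOfComputations}. Your explicit check of the Horn property for $\varphi_4$ (substituting at the literal level before pushing the outer negation of $\neg\chi_K$ inward) is a correct and worthwhile elaboration of a point the paper leaves implicit.
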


Over the domain of the reals, we can replace the predicate symbol $P$ in the same spirit, yet in a slightly different way.
For one thing, we add the assertion $\forall x.\; 0 \leq f(x) \;\wedge\; f(x) \leq 1$ to the encoding, which also introduces an explicit lower bound to the values of $f$.
As this assertion alone does not guarantee that in any model the image of $f : \Real \to \Real$ contains at most two values, we replace any occurrence of $\neg P(t)$ with $f(t) = 0$ and any occurrence of $P(t)$ with $f(t) > 0$.
Again, a CNF transformation leads to a Horn clause set.
\begin{theorem}\label{theorem:UndecidabilityWithFunctionsOverReals}
	(Un)satisfiability of the universal Horn fragment of linear arithmetic over the reals with a single additional uninterpreted unary function symbol is undecidable.
\end{theorem}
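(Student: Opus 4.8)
The plan is to carry over to the reals the same device that proves Theorem~\ref{theorem:UndecidabilityWithFunctions}, starting now from Theorem~\ref{theorem:UndecidabilityOverReals} rather than from Theorem~\ref{theorem:UndecidabilityOverIntegers}. Fix a two-counter machine $\cM$ with $K+1$ program lines and an input $\<m,n\>$, and let $\varphi := \varphi_1 \wedge \varphi_2 \wedge \varphi_3 \wedge \varphi_\cM \wedge \varphi_4$ be the encoding of Section~\ref{section:FirstEncoding}, now read over $\Real$ with $\leq$ and $<$ as genuine relation symbols; by the reduction underlying Theorem~\ref{theorem:UndecidabilityOverReals}, $\varphi$ is unsatisfiable over the reals if and only if $\cM$ halts on $\<m,n\>$. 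Introduce a fresh uninterpreted unary function symbol $f$, conjoin the assertion $\forall x.\, 0 \leq f(x) \wedge f(x) \leq 1$, and obtain $\varphi'$ from $\varphi$ by replacing every literal $\neg P(t)$ by the atom $f(t) = 0$ and every literal $P(t)$ by the atom $f(t) > 0$ --- in practice this amounts to re-reading the abbreviations $\phiChunk$, $\phiBreak$, $\phiStart$, $\phiEnd$ with $P$ replaced in this way (and the single stand-alone literal $\neg P(x)$ in (\ref{formula:II}) replaced by $f(x)=0$), without first distributing any surrounding negation.

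First I would establish that $\varphi$ and $\varphi'$ are equisatisfiable over the reals. Given a model $\cI \models \varphi$, extend it to $f$ by $f(a):=1$ for $a \in P^\cI$ and $f(a):=0$ otherwise; this respects $0 \leq f(x) \leq 1$ and makes $f(t)=0$ equivalent to $\neg P(t)$ and $f(t)>0$ equivalent to $P(t)$ at every point, so the replacements preserve truth and the result models $\varphi'$. Conversely, given $\cJ \models \varphi'$, put $P := \{ a \mid f^\cJ(a) > 0 \}$; here the added lower bound $0 \leq f(x)$ is what matters, since it forces, for every $a$, exactly one of $f^\cJ(a)=0$ and $f^\cJ(a)>0$ to hold, so these two atoms are genuinely complementary and the recovered $P$ faithfully mirrors $f$ even though the image of $f$ need not lie in $\{0,1\}$. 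Hence $\varphi'$ is unsatisfiable iff $\cM$ halts on $\<m,n\>$, which is undecidable by Proposition~\ref{proposition:UndecidabilityTwoCounterMachines}.

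It then remains to check that a CNF transformation of $\varphi'$ yields a Horn clause set, i.e.\ every clause carries at most one positive literal. The bound assertion contributes the positive unit clauses $\{0 \leq f(x)\}$ and $\{f(x) \leq 1\}$. After the replacement, each of $\phiChunk$, $\phiBreak$, $\phiStart$, $\phiEnd$ --- and hence each $\chi_j$ --- is a conjunction of the atoms $f(\cdot)=0$ and $f(\cdot)>0$, so the conjuncts (\ref{formula:I}) and $\varphi_2$ contribute only positive unit clauses, and every other conjunct of $\varphi'$ is a universal closure of an implication $A \rightarrow B$ whose consequent $B$ is a conjunction of atoms: either such $f$-atoms (in (\ref{formula:III}), (\ref{formula:VIII}), $\varphi_\cM$), or one of the single arithmetic atoms $x=y$, $y \geq 2x$, $y \geq 3x$, $y \geq 4x$, $y \leq x+5+K$, $f(x)=0$ occurring in (\ref{formula:II}), (\ref{formula:IV}) to (\ref{formula:VII}), and $\varphi_3$. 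Such an implication yields the clauses $\{\neg a : a \in A\} \cup \{b\}$ for $b$ ranging over the conjuncts of $B$; each has the single positive literal $b$, the $P$-derived atoms in $A$ having turned into negated atoms, hence negative literals. The one conjunct needing separate care is $\varphi_4 = \forall x.\, \phiChunk(x) \rightarrow \neg\chi_K(x)$: since the replacement is applied \emph{before} the outer negation in $\neg\chi_K(x)$ is distributed, $\neg\chi_K(x)$ becomes the negation of a conjunction of two positive atoms, i.e.\ a disjunction of two negative literals, so the resulting clause has no positive literal and is an all-negative goal clause. Altogether $\varphi'$ is a universal Horn sentence over $\<0,1,+,\leq,<,f\>$ with a single uninterpreted unary function symbol whose (un)satisfiability is undecidable, which is the claim. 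The main thing to get right is precisely this clause-by-clause Horn accounting --- in particular performing the substitution on the formula rather than on its negation normal form, so that the $P$-literals occurring in premises and inside $\neg\chi_K$ all become negative literals --- together with not overlooking the indispensable role of the lower bound $0 \leq f(x)$ in the backward half of the equisatisfiability argument.
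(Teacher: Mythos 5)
Your proposal is correct and follows essentially the same route as the paper: add the bound $\forall x.\, 0 \leq f(x) \wedge f(x) \leq 1$, replace $\neg P(t)$ by $f(t)=0$ and $P(t)$ by $f(t)>0$ before any negation is distributed, and observe that the CNF is Horn. Your additional care about the backward direction of the equisatisfiability (the role of the lower bound in making $f(t)=0$ and $f(t)>0$ complementary) and the clause-by-clause Horn accounting, including the all-negative clause arising from $\varphi_4$, correctly fills in the details the paper leaves implicit.
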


\subsection{On the degree of unsolvability}\label{section:UnsolvabilityDegree}

We have shown that the unsatisfiability problem of the universal fragment of Presburger arithmetic with additional uninterpreted predicate symbols is undecidable.
Next, we shall argue that this set is recursively enumerable.
In order to prove this, it suffices to give a sound calculus that, given an unsatisfiable sentence over the language in question, derives the empty clause, i.e.\ \emph{falsity}, in finitely many steps.
This property is known as \emph{refutational completeness}. 
In fact, the mentioned calculus would constitute a semi decision procedure for unsatisfiable sentences.

Indeed, \emph{hierarchic superposition} is refutationally complete for all sets of (hierarchic) clauses that are \emph{sufficiently complete}, if the background theory is \emph{compact} (cf.\ Theorem~24 in \cite{Bachmair1994b}).
In the case of the universal fragment of Presburger arithmetic without uninterpreted constant or function symbols, the two requirements are satisfied.
Sufficient completeness (cf.\ Definition~20 in \cite{Bachmair1994b}) concerns uninterpreted constant or function symbols that reach into the background sort $\Nat$.
Since we do not allow such symbols in our language, all sets of sentences are sufficiently complete.
For the same reason, the background theory (over universal sentences built from the signature $\<0,1,+\>$) is compact.
This means, every unsatisfiable set of universal first-order sentences over $\< 0,1,+ \>$ has some finite subset that is unsatisfiable.
Indeed, every unsatisfiable set of such sentences has an unsatisfiable subset that contains exactly one sentence.
Hence, the following proposition holds.
\begin{proposition}\label{proposition:UnivPresRE}
	The set of unsatisfiable sentences over the universal fragment of Presburger arithmetic with additional uninterpreted predicate symbols (not necessarily unary ones) is recursively enumerable.
\end{proposition}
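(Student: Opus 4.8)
The plan is to obtain recursive enumerability by exhibiting a \emph{sound and refutationally complete} proof procedure for the unsatisfiable sentences of the fragment: once such a procedure is at hand, dovetailing it over a recursive enumeration of all sentences of the fragment and collecting those on which it halts with a refutation yields exactly the unsatisfiable ones. Thus the whole task reduces to producing such a procedure.

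The route I would follow uses \emph{hierarchic superposition}. First I would normalize an input sentence by rewriting each abbreviation $s \leq t$, $s < t$ into the genuinely universal form of Section~\ref{section:basicDefinitionsPresburger} (so that no existential quantifier reappears) and transforming into CNF; this produces a finite clause set over $\langle 0,1,+\rangle$ whose only non-background symbols are uninterpreted predicates, and in particular no uninterpreted function or constant symbol of the background sort $\Nat$ occurs. Then I would verify the two hypotheses of the refutational completeness theorem for hierarchic superposition (Theorem~24 in \cite{Bachmair1994b}): \emph{sufficient completeness}, which is vacuous here precisely because no uninterpreted symbol reaches into the background sort, and \emph{compactness} of the background theory of universal $\langle 0,1,+\rangle$-sentences, which follows from first-order compactness. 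With both hypotheses in place, hierarchic superposition refutes every unsatisfiable input in finitely many steps, and the background-consistency tests it performs along the way are effective because Presburger arithmetic is decidable --- which gives the required semi-decision procedure.

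As a more self-contained alternative I would argue via Herbrand expansions. For $\Phi = \forall x_1\dots x_k.\,\varphi$ with $\varphi$ quantifier-free, consider all ground instances obtained by substituting numerals for the variables; in each instance every arithmetic atom is variable-free, hence has a determined truth value in $(\Nat,0,1,+)$, so replacing those atoms by their values and evaluating the argument tuples of the predicate atoms to concrete numbers leaves a propositional ground formula over Boolean variables $P_i(\bar n)$. One then checks that $\Phi$ is unsatisfiable over the intended expansions of standard arithmetic iff this countable propositional clause set is unsatisfiable, hence --- by propositional compactness --- iff some finite subset of it is, and such a finite test is effective; enumerating finite sets of instantiations and running the test once more yields the unsatisfiable sentences. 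I do not expect a real obstacle here; the points requiring care are the bookkeeping that treats $P_i(\bar s)$ and $P_i(\bar t)$ as the same Boolean variable exactly when $\bar s$ and $\bar t$ evaluate to the same tuple (so that propositional compactness is applied to the correct object) and, on the first route, the observation that ``no uninterpreted symbol touches the background sort'' really does trivialize sufficient completeness in the sense of \cite{Bachmair1994b}. Conceptually the proposition merely records that a universal quantifier prefix over a decidable background extended only by uninterpreted predicates keeps the unsatisfiability problem inside $\Sigma^0_1$.
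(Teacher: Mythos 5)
Your first route is essentially the paper's own proof: the paper also invokes refutational completeness of hierarchic superposition (Theorem~24 of \cite{Bachmair1994b}), notes that sufficient completeness is vacuous because no uninterpreted constant or function symbol reaches into the background sort, and that the background theory of universal $\langle 0,1,+\rangle$-sentences is compact (the paper even observes that every unsatisfiable set of such sentences has a singleton unsatisfiable subset, since Presburger arithmetic is complete). Your second, Herbrand-style route is a genuinely different and more elementary argument that the paper does not give: grounding over numerals, evaluating the arithmetic atoms in the standard model, and applying propositional compactness directly yields an effective finite-instance test, at the price of redoing by hand (including the identification of $P_i(\bar s)$ with $P_i(\bar t)$ when $\bar s$ and $\bar t$ evaluate equally) what the hierarchic-superposition machinery packages for you; it buys self-containedness, whereas the paper's route buys a concrete calculus that is reused elsewhere in the paper (e.g.\ in the alternative proof of Lemma~\ref{lemma:SimulationOfComputations}). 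Both arguments are sound.
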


From the literature on the \emph{arithmetical hierarchy} (see, e.g.\ \cite{Rogers1987, Soare1987, Odifreddi1992}) we know the following.
\begin{proposition}\label{proposition:SigmaBasics}~
	\begin{enumerate}[label=(\roman{*}), ref=(\roman{*})]
		\item\label{proposition:ReSets:One} The set $\Sigma^0_1$ captures exactly the recursively enumerable sets.
		\item\label{proposition:ReSets:Two} The set $\Pi^0_1$ captures exactly the sets whose complement is recursively enumerable.
		\item\label{proposition:ReSets:Three} The halting problem for (ordinary) Turing machines is $\Sigma^0_1$-complete.
	\end{enumerate}	
\end{proposition}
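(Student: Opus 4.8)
The plan is to derive all three items from two standard building blocks of elementary recursion theory: bounded simulation via a step‑counting predicate (Kleene's normal form, the $T$‑predicate) and unbounded dovetailed search. First I would fix the definitions in use: a set $A \subseteq \Nat$ lies in $\Sigma^0_1$ iff $A = \{n : \exists m\, R(n,m)\}$ for some decidable relation $R \subseteq \Nat\times\Nat$, and $\Pi^0_1$ consists exactly of the complements of $\Sigma^0_1$ sets.

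For \ref{proposition:ReSets:One} I would prove both inclusions. If $A$ is recursively enumerable, pick a Turing machine $M$ whose halting domain is $A$ and let $R(n,m)$ assert ``$M$ halts on input $n$ in at most $m$ steps'': this relation is decidable because it is settled by simulating $M$ on $n$ for $m$ steps, and $A = \{n : \exists m\, R(n,m)\}$, whence $A \in \Sigma^0_1$. Conversely, given $A = \{n : \exists m\, R(n,m)\}$ with $R$ decidable, the machine that on input $n$ tests $R(n,0), R(n,1), R(n,2), \ldots$ in turn and halts as soon as one of them holds has halting domain exactly $A$, so $A$ is r.e. Item \ref{proposition:ReSets:Two} is then immediate: $A \in \Pi^0_1$ iff $\Nat\setminus A \in \Sigma^0_1$ by definition, which by \ref{proposition:ReSets:One} holds iff $\Nat\setminus A$ is r.e., i.e.\ iff $A$ is co‑r.e.

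For \ref{proposition:ReSets:Three} I would take the halting problem in the form $K := \{\langle M, w\rangle : M \text{ halts on } w\}$ (the diagonal variant $\{e : M_e \text{ halts on } e\}$ is many‑one equivalent and could be used instead). Membership $K \in \Sigma^0_1$ follows exactly as above, since $\langle M,w\rangle \in K$ iff $\exists t\, $(``$M$ halts on $w$ within $t$ steps'') and the inner predicate is decidable. For $\Sigma^0_1$‑hardness I would take an arbitrary $A \in \Sigma^0_1$; by \ref{proposition:ReSets:One}, $A$ is the halting domain of some machine $M_A$, and $n \mapsto \langle M_A, n\rangle$ is a computable many‑one reduction of $A$ to $K$. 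Hence $K$ is $\Sigma^0_1$‑complete.

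The argument is entirely routine; the only point that warrants care is the formalization underpinning the step‑counting relation, namely committing to a concrete machine model and invoking the fact that ``run $M$ on $n$ for $m$ steps and check whether it has halted'' defines a primitive‑recursive, in particular decidable, predicate — this is essentially Kleene's normal form theorem. Once that is granted, everything else reduces to bounded simulation plus unbounded search, and I would in fact present this proposition with a pointer to a standard text, recording only the sketch above.
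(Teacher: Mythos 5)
Your proof is correct and matches the paper's intent: the paper itself offers no argument beyond citing the corresponding results in Soare's textbook, and your sketch (Kleene-style step-counting for the bounded simulation, dovetailed search for the converse, and a many-one reduction onto the halting set for completeness) is exactly the standard argument those citations stand for. Nothing is missing.
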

\begin{proof}
	\ref{proposition:ReSets:One} and \ref{proposition:ReSets:Two} are reformulations of Theorems~II.1.2 and IV.1.3 in \cite{Soare1987}, respectively.
	\ref{proposition:ReSets:Three} combines the following parts of \cite{Soare1987}: Definitions~I.3.1, I.4.1, I.4.5, Theorem~II.4.2 and the discussion after Definition~IV.2.1 on page 64.
\end{proof}	

Since we have completed a chain of reductions from the halting problem of Turing machines via the halting problem of two-counter machines to the unsatisfiability problem of the universal fragment of Presburger arithmetic with uninterpreted predicate symbols, we conclude $\Sigma^0_1$-completeness of the latter problem by Lemma~\ref{lemma:SimulationOfComputations} together with  Propositions~\ref{proposition:UnivPresRE} and \ref{proposition:SigmaBasics}.
\begin{theorem}\label{theorem:UnsatUnivPresSigmaZeroOneComplete}
	The set of unsatisfiable sentences over the universal fragment of Presburger arithmetic with additional uninterpreted predicate symbols is $\Sigma^0_1$-complete.
\end{theorem}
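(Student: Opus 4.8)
The plan is to prove $\Sigma^0_1$-completeness by establishing membership in $\Sigma^0_1$ and $\Sigma^0_1$-hardness separately, assembling results already available in the excerpt. I expect no genuine obstacle: the theorem is essentially a corollary of Lemma~\ref{lemma:SimulationOfComputations}, Proposition~\ref{proposition:UnivPresRE}, and Proposition~\ref{proposition:SigmaBasics}, and the only point requiring care is effectivity of the reductions.

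For membership, I would simply invoke Proposition~\ref{proposition:UnivPresRE}: the set of unsatisfiable sentences of the universal fragment of Presburger arithmetic with (arbitrarily many, arbitrary-arity) uninterpreted predicate symbols is recursively enumerable, hence by Proposition~\ref{proposition:SigmaBasics}\ref{proposition:ReSets:One} it lies in $\Sigma^0_1$. No further work is needed here; the enumerability is already witnessed by the refutational completeness of hierarchic superposition in the sufficiently complete, compact setting discussed before Proposition~\ref{proposition:UnivPresRE}.

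For $\Sigma^0_1$-hardness, I would exhibit a chain of many-one reductions terminating in our problem. Begin with the halting problem for ordinary Turing machines, which is $\Sigma^0_1$-complete by Proposition~\ref{proposition:SigmaBasics}\ref{proposition:ReSets:Three}. Minsky's simulation (the source of Proposition~\ref{proposition:UndecidabilityTwoCounterMachines}, from Theorem~14.1-1 in \cite{Minsky1967}) provides a computable map sending a Turing machine $M$ to a two-counter machine $\cM$ together with an input pair $\<m,n\> \in \Nat \times \Nat$ such that $M$ halts on empty input iff $\cM$ halts when started on $\<m,n\>$. Then the encoding of Section~\ref{section:FirstEncoding} yields, computably from $\cM$ and $\<m,n\>$, the sentence $\varphi := \varphi_1 \wedge \varphi_2 \wedge \varphi_3 \wedge \varphi_\cM \wedge \varphi_4$ of the universal fragment of Presburger arithmetic extended by a single uninterpreted unary predicate symbol $P$; by Lemma~\ref{lemma:SimulationOfComputations}, $\cM$ halts on $\<m,n\>$ iff $\varphi$ is unsatisfiable. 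Composing the two computable maps gives a many-one reduction of the halting problem to the unsatisfiability problem, so the latter is $\Sigma^0_1$-hard.

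Combining the two directions yields $\Sigma^0_1$-completeness. The step deserving the most attention is not difficult but is where the argument could be sloppy: each map in the chain must be an honest computable function on finite syntactic descriptions, which is clear for Minsky's construction and immediate for the fully explicit, uniform formula schemes of Section~\ref{section:FirstEncoding} (the number of instruction-encoding conjuncts is linear in $K$, and all coefficients and the constant $d$ are computed directly from $\cM$, $m$, $n$). I would also remark on generality: allowing more predicate symbols, or predicates of higher arity, only enlarges the problem without affecting membership (Proposition~\ref{proposition:UnivPresRE} already covers that case) or hardness (a single unary $P$ already suffices for the lower bound), so the statement holds as claimed.
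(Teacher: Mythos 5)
Your proposal is correct and follows essentially the same route as the paper: membership via Proposition~\ref{proposition:UnivPresRE} (refutational completeness of hierarchic superposition) together with Proposition~\ref{proposition:SigmaBasics}, and hardness via the chain of reductions from the Turing-machine halting problem through two-counter machines to unsatisfiability using Lemma~\ref{lemma:SimulationOfComputations}. Your added remarks on the computability of each reduction step and on the generality with respect to the number and arity of predicate symbols are sound and merely make explicit what the paper leaves implicit.
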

It is worth to notice that the theorem can be translated to the realm of linear arithmetic over the reals. The reason is that hierarchic superposition is also refutationally complete over the universal fragment of this language, if there are no uninterpreted constant or function symbols involved.

Since any reduction of a problem $S$ to a problem $T$ (both read as a set of G\"odel numbers) at the same time yields a reduction from $\overline{S}$ to $\overline{T}$, the complement of a $\Sigma^0_1$-complete set is complete for $\Pi^0_1$. Hence, Theorem~\ref{theorem:UnsatUnivPresSigmaZeroOneComplete} entails $\Pi^0_1$-completeness of the set of satisfiable sentences over the same language.

There are strong ties between (un)satisfiability in the universal fragment of the language we consider and (in)validity in the dual language, the existential fragment.
The bottom line is that the obtained completeness results can be transfered to the corresponding (in)validity problems.
The overall situation is depicted in Table~\ref{table:SatAndVal}.

For the sake of completeness, we briefly discuss (un)satisfiability for the existential fragment.
Kruglov and Weidenbach \cite{Kruglov2012} have presented a general result regarding the satisfiability problem for hierarchic clause sets that are ground.
More precisely, they have devised a decision procedure for the problem that is based on a hierarchic superposition calculus.
\begin{proposition}[corollary of Theorem~23 from \cite{Kruglov2012}]
	Satisfiability of the existential fragment of Presburger arithmetic with additional uninterpreted predicate symbols is decidable.
\end{proposition}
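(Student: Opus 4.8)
\emph{Proof plan.} The plan is to reduce satisfiability of an existential sentence to satisfiability of a \emph{ground} hierarchic clause set and then to quote Theorem~23 of \cite{Kruglov2012}. First I would take an arbitrary sentence $\varphi$ from the existential fragment, that is, $\varphi = \exists x_1 \ldots \exists x_n.\, \psi$ with $\psi$ quantifier-free over a signature $\<0,1,+,P_1,\ldots,P_k\>$ in which the $P_i$ are the additional uninterpreted predicate symbols. Skolemizing the block of existential quantifiers, I replace each $x_i$ by a fresh uninterpreted constant symbol $c_i$ of the arithmetic sort and obtain the ground sentence $\psi' := \psi[x_1/c_1, \ldots, x_n/c_n]$. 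A standard Skolemization argument shows that $\varphi$ is satisfiable if and only if $\psi'$ is. Transforming $\psi'$ into CNF then yields a finite clause set $\Phi$ all of whose clauses are ground, the only uninterpreted symbols occurring in $\Phi$ being the predicates $P_1,\ldots,P_k$ and the constants $c_1,\ldots,c_n$.

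Next I would invoke the decidability of satisfiability for ground hierarchic clause sets. Theorem~23 of \cite{Kruglov2012} states that hierarchic superposition, combined with a decision procedure for the satisfiability of background constraints, is a decision procedure for ground hierarchic clause sets. Here the background theory is Presburger arithmetic, whose theory is decidable; the background constraints arising during the run are existential Presburger formulas over the free constants $c_i$, and checking whether such a constraint is satisfiable amounts to deciding the truth of its existential closure, which is decidable. Hence hierarchic superposition terminates on $\Phi$ and correctly reports satisfiability or unsatisfiability, so the satisfiability problem for the existential fragment is decidable. (Alternatively, one can make the argument completely elementary: only finitely many ground $P$-atoms occur in $\psi'$, and for each of the finitely many truth assignments to them one is left with a ground Presburger formula over the $c_i$ whose satisfiability is decidable; $\psi'$ is satisfiable iff one of these guesses succeeds while being consistent with the equalities and disequalities among the atoms' arguments.)

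The point that deserves care — and the only real obstacle — is the status of the Skolem constants $c_i$: they reach into the background sort $\Nat$, so $\Phi$ is not sufficiently complete in the sense of \cite{Bachmair1994b} in general, and sufficient completeness is precisely the hypothesis needed for refutational completeness of hierarchic superposition in the non-ground setting. The resolution is to rely on the fact, established by Kruglov and Weidenbach, that for \emph{ground} clause sets this requirement is dispensable, essentially because no non-ground inferences occur and the calculus degenerates to Boolean reasoning over background-constraint checks. Everything else — equisatisfiability after Skolemization, the CNF transformation, and the finiteness of $\Phi$ — is routine, so I would keep those steps brief.
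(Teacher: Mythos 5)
Your proposal is correct and follows exactly the route the paper intends: the paper gives no explicit proof but presents the proposition as a corollary of Kruglov and Weidenbach's decidability result for ground hierarchic clause sets, and your Skolemization of the existential block into fresh background-sort constants followed by CNF transformation is precisely the missing reduction, with the sufficient-completeness worry correctly dispatched by the groundness of the resulting clause set. Your parenthetical elementary argument (guessing truth values for the finitely many ground $P$-atoms and checking an existential Presburger constraint for each guess) is a sound self-contained alternative that avoids citing \cite{Kruglov2012} altogether, but it is not needed for the claim.
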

With this knowledge we can complete the overview in Table~\ref{table:SatAndVal} and thus reveal the full picture of where the (un)satisfiability and (in)validity problems of the universal and existential fragments of Presburger arithmetic augmented with uninterpreted predicate symbols reside in the arithmetical hierarchy.
\begin{table}[t]
	\begin{center}
	\begin{tabular}{c|cccc}
				&	Satisfiability	&	Unsatisfiability	&	Validity	&	Invalidity	\\
		\hline~\\[-1ex]
		$\forall^*$-fragment	&	$\Pi^0_1$-complete	&	$\Sigma^0_1$-complete	&	$\Sigma^0_0$		&	$\Sigma^0_0$	\\[0.5ex]
		$\exists^*$-fragment	&	$\Sigma^0_0$		&	$\Sigma^0_0$		&	$\Sigma^0_1$-complete	&	$\Pi^0_1$-complete	
	\end{tabular}
	\end{center}
	\caption{Overview regarding the degree of unsolvability of the (un)satisfiability and (in)validity problems for the purely universal and purely existential fragment of Presburger arithmetic with additional uninterpreted predicate symbols. Notice that membership in $\Sigma^0_0$ (which coincides with $\Pi^0_0$) entails decidability of the respective problem.}
	\label{table:SatAndVal}
\end{table}

\section{One $\forall\exists$ quantifier alternation leads to $\Sigma^1_1$-completeness}\label{section:SigmaOneOneCompleteness}

\bigskip
Halpern has shown that the satisfiability problem for Presburger arithmetic with any choice of additional uninterpreted function symbols and predicate symbols lies in $\Sigma^1_1$ (Theorem~3.1 in \cite{Halpern1991}). 
This result is independent of the number of occurring quantifier alternations.
In the present section, we show that already a single quantifier alternation suffices to make the problem complete for $\Sigma^1_1$.
We leverage the following result, due to Alur and Henzinger.
\begin{proposition}\label{proposition:SigmaOneOneHardnessOfRecurrence}[Lemma 8 in \cite{Alur1994}]
	The problem of deciding whether a given nondeterministic two-counter machine has a recurring computation is $\Sigma^1_1$-hard.
\end{proposition}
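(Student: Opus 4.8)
The plan is to reduce the \emph{recurring Turing machine problem}---which, by the result of Harel, Pnueli and Stavi invoked in Section~\ref{section:Introduction}, is $\Sigma^1_1$-complete and hence $\Sigma^1_1$-hard---to the recurrence problem for nondeterministic two-counter machines. Concretely, given a nondeterministic Turing machine $M$, I would build a nondeterministic two-counter machine $\cM_M$ that simulates the computations of $M$ started on the empty tape and has the property that $\cM_M$ visits its start line infinitely often along some infinite computation if and only if $M$ visits its initial control state infinitely often along some infinite computation, i.e.\ if and only if $M$ is recurring. Since the reduction is clearly effective, $\Sigma^1_1$-hardness of the recurrence problem for two-counter machines follows.

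The simulation itself is Minsky's (Theorem~14.1-1 in \cite{Minsky1967}): the two halves of $M$'s tape become two push-down stacks, each stack---a finite word over $M$'s tape alphabet---is coded as a natural number via a base-$(k{+}1)$ digit expansion, and the resulting finitely many counters are folded into two by a prime-power G\"odel coding, the second counter serving as scratch space for the multiplications, divisions and remainder tests needed to read, push and pop. A single step of $M$---decode the scanned symbol, pick nondeterministically one of the finitely many applicable transitions, update the stacks and the head position---is thereby realized by a finite, bounded block of \texttt{inc} and \texttt{test\&dec} instructions, and since all counter values remain finite such a block always terminates. Nondeterministic branching is added to the instruction set exactly at the points where $M$ itself branches (this is the mild extension of the machine model discussed later in the paper).

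The reduction then designates line $0$ of $\cM_M$---the start line---as the recurrence marker. Line $0$ guards a small gadget that, on its first visit (detectable by a test on the counters, which at the very start are $\<0,0\>$), loads the G\"odel code of the empty tape, and on every later visit---where the counters already hold the valid code of $M$'s current tape---acts as a no-op; control then falls into the block that simulates one step of $M$ \emph{from its initial control state}. Whenever the ongoing simulation of $M$ re-enters the initial control state, control is routed back to line $0$; all other control states of $M$ are handled by the ordinary Minsky blocks. In this way line $0$ is entered precisely at those moments of the run that correspond to ``$M$ is in its initial state'' (the current tape being arbitrary), and because each $M$-step is simulated by a nonempty finite block, infinite computations of $\cM_M$ visiting line $0$ infinitely often correspond exactly to infinite computations of $M$ visiting its initial state infinitely often. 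This yields the desired equivalence.

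I expect the only delicate point to be the interface between the macro level (control states of $M$) and the micro level (counter arithmetic): one must ensure that line $0$ is reached \emph{only} when $M$ is genuinely in its initial state and never in the middle of an arithmetic sub-computation, that the one-time initialization gadget fires exactly once, and that between consecutive visits to line $0$ the simulation neither deadlocks nor diverges in micro-steps (the latter being guaranteed because every arithmetic block operates on finite counters). All of this is routine bookkeeping once Minsky's construction is laid out carefully, so the substantive content of the argument is the correspondence between the two recurrence conditions established above.
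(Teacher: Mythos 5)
The paper does not actually prove this proposition---it is imported verbatim as Lemma~8 of Alur and Henzinger \cite{Alur1994}---so there is no in-paper argument to compare against; your reconstruction (a reduction from the $\Sigma^1_1$-complete recurring-Turing-machine problem of Harel, Pnueli and Stavi to the recurrence problem for nondeterministic two-counter machines via Minsky's tape-to-stacks-to-counters simulation, with line $0$ serving as the marker for visits to $M$'s initial control state) is precisely the standard argument behind that citation and is sound. The one point to watch, which you already flag as bookkeeping, is that the ``first visit'' test at line $0$ must rely on a G\"odel coding under which $\<0,0\>$ is never the code of a reachable tape configuration, so that the initialization gadget cannot refire on a later visit.
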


A \emph{nondeterministic two-counter machine} differs from the deterministic model described in Section~\ref{section:Preliminaries} in that it allows nondeterministic branching after a program line has been executed. This means that after the execution of a program line $j$ (which does not result in a jump induced by a \texttt{test\&dec} instruction) the machine does not necessarily proceed to the $(j+1)$-st line, but may have the choice between two specified options. 

This kind of nondeterminism can easily be incorporated into the encoding presented in Section~\ref{section:Encoding}.
For instance, the nondeterministic version of the instruction $j:\texttt{inc}(C_1)$ can be represented by the formula
\begin{align*} 
			\forall x y z.\; &\phiChunk(x) \;\wedge\; 2x \leq y \;\wedge\; y \leq 3x \;\wedge\; \phiEnd(y) \;\wedge\; 3x \leq z \;\wedge\; z \leq 4x \;\wedge\; \phiEnd(z) \;\wedge\; \chi_j(x) \\
			&\;\;\longrightarrow\;\; \phiEnd(6x+y+1) \;\wedge\; \phiEnd(9x+z) \;\wedge\; \bigl( \chi_{j'}(4x) \vee \chi_{j''}(4x) \bigr) ~.
\end{align*}
The last conjunct $\bigl( \chi_{j'}(4x) \vee \chi_{j''}(4x) \bigr)$ now offers a choice between program lines $j'$ and $j''$ as the ones that are to be executed next.

Consequently, we can reuse major parts of our encoding in order to prove $\Sigma^1_1$-hardness.
For any nondeterministic two-counter machine $\cM$ we write $\varphi'_\cM$ to address the encoding of $\cM$'s program in accordance with Section~\ref{section:Encoding} and the just described adaptations due to the nondeterministic setting.

A computation performed by a nondeterministic two-counter machine is considered to be \emph{recurring} if and only if it starts with both counters set to zero and if it reaches the program line with address $0$ infinitely often.
This means, we have to remove $\varphi_4$ from the encoding set of sentences and replace it with a proper formalization of the recurrence condition:
	\[ \varphi'_5 :=\;\; \forall x \exists y.\; x \leq y \;\wedge\; \phiChunk(y) \;\wedge\; \chi_0(y) ~. \]
$\varphi'_5$ formulates recurrence in a positive fashion by saying that at any point in time program line $0$ will be reached eventually. 
Formally speaking, $\varphi'_5$ exhibits a $\forall \exists \exists$ quantifier prefix, since $x \leq y$ is an abbreviation for $\exists  z.\, x + z = y$.

Finally, in order to account for the specific input requirements posed in the definition of recurrence, we construct $\varphi'_2$ from $\varphi_2$ by setting $m = n = 0$, i.e.\
 	\[ \varphi'_2 :=\;\; \phiEnd(2 d+3) \wedge \phiEnd(3 d+3) \wedge \chi_0(d) ~.\]	
 	
\begin{lemma}
	The nondeterministic two-counter machine $\cM$ has a recurring run if and only if $\varphi_1 \wedge \varphi'_2 \wedge \varphi_3 \wedge \varphi'_\cM \wedge \varphi'_5$ is satisfiable.
\end{lemma}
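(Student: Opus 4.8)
The plan is to mirror the proof of Lemma~\ref{lemma:SimulationOfComputations} as closely as possible, adjusting only where the nondeterministic and recurrence features intervene. First I would note that all of the technical observations \ref{proposition:ReSets:One}-type facts about $\varphi_1$ — namely the analogues of items (a)--(g) from the proof of Lemma~\ref{lemma:SimulationOfComputations} — carry over verbatim, since $\varphi_1$ and $\varphi_3$ are unchanged. Thus every model $\cI$ of $\varphi_1 \wedge \varphi_3$ interprets $P$ as a faithful encoding of an infinite sequence of triples $\langle \ell, c_1, c_2\rangle$ with $\ell \le K$, and adding $\varphi'_2$ pins the first triple to $\langle 0,0,0\rangle$. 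The only genuinely new point on the ``structural'' side is that $\varphi'_\cM$ no longer forces a \emph{unique} successor triple: at a nondeterministic line $j$ the disjunction $\chi_{j'}(4x) \vee \chi_{j''}(4x)$ permits either continuation. Hence a model of $\varphi_1 \wedge \varphi'_2 \wedge \varphi_3 \wedge \varphi'_\cM$ encodes \emph{some} run of $\cM$ from $\langle 0,0,0\rangle$, not necessarily the unique one, and conversely every run of $\cM$ from $\langle 0,0,0\rangle$ (padded by the \texttt{halt}-repetition trick if finite) can be turned into such a model $\cI_\tau$. This establishes a correspondence between models of $\varphi_1 \wedge \varphi'_2 \wedge \varphi_3 \wedge \varphi'_\cM$ and runs of $\cM$ started on $\langle 0,0,0\rangle$.

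Next I would analyze what $\varphi'_5$ contributes. By observation (a) (adapted), the positions $x$ with $\cI, [x\Mapsto x] \models \phiChunk(x)$ are exactly the chunk starting points $4^i d$, and the $i$-th such chunk encodes the $i$-th configuration of the run represented by $\cI$. So $\cI \models \varphi'_5$ says: for every natural number $x$ there is some chunk starting at position $y \ge x$ whose program-line subchunk encodes $0$, i.e.\ program line $0$ is reached at arbitrarily late points in the run — equivalently, infinitely often. Therefore $\cI \models \varphi_1 \wedge \varphi'_2 \wedge \varphi_3 \wedge \varphi'_\cM \wedge \varphi'_5$ iff the run of $\cM$ that $\cI$ encodes starts from $\langle 0,0,0\rangle$ and visits line $0$ infinitely often, which is precisely the definition of a recurring run. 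Combining the two directions: if $\cM$ has a recurring run $\tau$, then $\cI_\tau$ satisfies the whole conjunction; and if the conjunction is satisfiable, any model encodes a recurring run.

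The main obstacle I anticipate is the forward direction's bookkeeping for \emph{finite} runs that are nonetheless recurring in the encoding sense. If $\cM$'s run $\tau$ is genuinely infinite and hits line $0$ infinitely often, the padding issue does not arise. But one must be careful that a recurring run is, by definition, a nonterminating computation (it reaches line $0$ infinitely often and in particular never halts), so the \texttt{halt}-repetition padding used in Lemma~\ref{lemma:SimulationOfComputations} is simply never triggered here; conversely, if a model's encoded run ever reached a \texttt{halt} line, the subsequent chunks would all carry line $K \neq 0$ (as $K$ labels \texttt{halt} and we may assume $K \neq 0$), contradicting $\varphi'_5$. I would make this observation explicit to rule out any degenerate interaction between $\varphi'_\cM$'s \texttt{halt}-encoding and $\varphi'_5$. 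A secondary point worth a sentence is that $\varphi'_5$, despite the $\forall\exists\exists$ prefix after expanding $\le$, is exactly the kind of single-alternation sentence the theorem is about, so no further massaging is needed. Beyond these remarks the proof is a routine re-run of the earlier argument, and I would present it as such rather than repeating the verification of observations (a)--(g).
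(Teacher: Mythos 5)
Your proposal is correct and takes exactly the route the paper intends: the paper states this lemma without an explicit proof, expecting the reader to rerun the argument of Lemma~\ref{lemma:SimulationOfComputations} with the two adaptations you identify (models now encode \emph{some} rather than \emph{the} run, and $\varphi'_5$ holds iff line $0$ occurs in infinitely many chunks). Your explicit check that the \texttt{halt}-padding cannot spuriously satisfy $\varphi'_5$ — which needs $K \neq 0$, a harmless assumption since any program can be padded — is a sensible addition rather than a deviation.
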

By Proposition~\ref{proposition:SigmaOneOneHardnessOfRecurrence}, this yields $\Sigma^1_1$-hardness. 
Due to the result by Halpern \cite{Halpern1991}, we know that the set of satisfiable Presburger arithmetic sentences with additional uninterpreted predicate symbols lies in $\Sigma^1_1$.
Hence, the following theorem holds.
\begin{theorem}\label{theorem:SigmaOneOneCompleteness}
	The set of satisfiable sentences of the $(\forall^* \wedge \forall\exists^2)$-fragment of Presburger arithmetic with a single additional uninterpreted unary predicate symbol is $\Sigma^1_1$-complete.
\end{theorem}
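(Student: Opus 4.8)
The plan is to combine two ingredients. For membership in $\Sigma^1_1$, I would invoke Halpern's Theorem~3.1 from \cite{Halpern1991}, which places satisfiability of arbitrary sentences of Presburger arithmetic with uninterpreted predicate symbols in $\Sigma^1_1$ irrespective of the quantifier prefix; this applies in particular to the $(\forall^* \wedge \forall\exists^2)$-fragment. For $\Sigma^1_1$-hardness I would reduce the recurrence problem for nondeterministic two-counter machines, which is $\Sigma^1_1$-hard by Proposition~\ref{proposition:SigmaOneOneHardnessOfRecurrence}, to satisfiability in the fragment, using the encoding already prepared above: given $\cM$, form $\varphi_1 \wedge \varphi'_2 \wedge \varphi_3 \wedge \varphi'_\cM \wedge \varphi'_5$ and prove the displayed lemma, namely that this sentence is satisfiable if and only if $\cM$ has a recurring run.

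First I would verify that the constructed sentence indeed lies in the fragment. The conjuncts $\varphi_1$, $\varphi_3$ and every clause of $\varphi'_\cM$ (including the nondeterministic variants, whose consequents are disjunctions of $\chi$-formulas) are purely universal, and $\varphi'_2$ is quantifier-free; only $\varphi'_5$ contributes a quantifier alternation, and since $x \leq y$ abbreviates $\exists z.\, x+z=y$, its prefix is $\forall\exists\exists$, i.e.\ $\forall\exists^2$. Hence the conjunction belongs to the $(\forall^* \wedge \forall\exists^2)$-fragment, and a single unary predicate $P$ suffices.

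The heart of the argument is the lemma, and here I would lean directly on the proof of Lemma~\ref{lemma:SimulationOfComputations}, whose structural observations (a)--(g) about models of $\varphi_1$ carry over verbatim: any model $\cI$ of $\varphi_1 \wedge \varphi'_2 \wedge \varphi_3$ interprets $P$ as an infinite sequence of unary-encoded triples whose first triple is $\langle 0,0,0\rangle$ and whose first components never exceed $K$. The only genuinely new point over the deterministic case is the nondeterminism of $\cM$: a model of $\varphi'_\cM$ need no longer be unique, but the instruction-encoding implications still force the $(i{+}1)$-st triple to be a legal successor of the $i$-th under $\cM$'s program, the disjunctive consequents leaving exactly the intended choice of next program line. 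Thus $P^\cI$ faithfully encodes \emph{some} run of $\cM$ started from $\langle 0,0,0\rangle$; conversely, every run of $\cM$ from $\langle 0,0,0\rangle$ translates into an interpretation $\cI_\tau$ of $P$ satisfying $\varphi_1 \wedge \varphi'_2 \wedge \varphi_3 \wedge \varphi'_\cM$, exactly as in Section~\ref{section:Encoding}. It then remains to observe that, on such models, $\varphi'_5$ holds precisely when for every position there is a later chunk carrying program line $0$, i.e.\ when the encoded run visits line $0$ infinitely often --- which is the recurrence condition. Combining the two directions yields the lemma, and with Proposition~\ref{proposition:SigmaOneOneHardnessOfRecurrence} this establishes $\Sigma^1_1$-hardness; together with membership it proves the theorem.

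The main obstacle I anticipate is not in the arithmetic bookkeeping, which is inherited from Section~\ref{section:Encoding}, but in pinning down that $\varphi'_5$ captures recurrence with no slack. One must argue that a model cannot cheat: every chunk of $P^\cI$ genuinely corresponds to a configuration actually reached, so that no chunk can encode a spurious line-$0$ configuration out of thin air --- this is exactly what observations (a)--(g) together with the instruction-encoding implications guarantee by induction along the sequence of chunks --- and, conversely, that a recurring run is never blocked when translated into $\cI_\tau$, which is immediate since $\varphi_4$ has been dropped and nothing remains that such a run could violate. A secondary point to state carefully is the fate of \texttt{halt} in this setting: either one notes that the machines arising from Proposition~\ref{proposition:SigmaOneOneHardnessOfRecurrence} have no \texttt{halt} line, or one observes that a \texttt{halt} configuration makes line $0$ unreachable thereafter (assuming $0 \neq K$), so runs that halt falsify $\varphi'_5$ and are correctly excluded.
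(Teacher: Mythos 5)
Your proposal is correct and follows essentially the same route as the paper: membership via Halpern's Theorem~3.1, hardness by reducing recurrence of nondeterministic two-counter machines (Proposition~\ref{proposition:SigmaOneOneHardnessOfRecurrence}) through the adapted encoding $\varphi_1 \wedge \varphi'_2 \wedge \varphi_3 \wedge \varphi'_\cM \wedge \varphi'_5$, with the $\forall\exists^2$ prefix arising from $x \leq y$ inside $\varphi'_5$. Your added care about the fate of \texttt{halt} and about models not encoding spurious line-$0$ configurations is a reasonable elaboration of details the paper leaves implicit, but it does not change the argument.
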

Due to the strong relations between satisfiability in one fragment and invalidity in its dual, the above theorem entails $\Pi^1_1$-completeness of the set of invalid sentences in the $(\exists^* \vee \exists\forall^2)$-fragment.

Moreover, notice that the theorem can be reformulated in terms of uninterpreted unary functions instead of uninterpreted unary predicates. However, in contrast to Theorem~\ref{theorem:UndecidabilityWithFunctions}, we lose the property that the encoding results in a set of Horn clauses when transformed into CNF. The reason is the involved nondeterminism and the way we have encoded nondeterministic branching.

Over the domain of the reals, we can only show $\Sigma^1_1$-hardness of the satisfiability problem, since Halpern's upper bound does only cover the realm of the natural numbers.

\begin{theorem}
	The set of satisfiable sentences of the $(\forall^* \wedge \forall\exists^2)$-fragment of linear arithmetic over the reals with a single additional uninterpreted unary predicate symbol is $\Sigma^1_1$-hard.
\end{theorem}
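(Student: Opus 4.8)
The plan is to reuse verbatim the encoding employed for Theorem~\ref{theorem:SigmaOneOneCompleteness}, now interpreted over the reals (with $\le$ and $<$ treated as genuine signature symbols), and to argue --- in the spirit of the discussion preceding Theorem~\ref{theorem:UndecidabilityOverReals} --- that moving from the domain $\Nat$ to the domain $\Real$ does not affect satisfiability. Concretely, given a nondeterministic two-counter machine $\cM$, put $\varphi := \varphi_1 \wedge \varphi'_2 \wedge \varphi_3 \wedge \varphi'_\cM \wedge \varphi'_5$; since $\le$ belongs to the signature over the reals, $\varphi'_5$ has prefix $\forall\exists$ and $\varphi$ is a $(\forall^*\wedge\forall\exists^2)$-sentence (indeed it uses only a single existential quantifier). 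I claim that $\cM$ has a recurring run if and only if $\varphi$ is satisfiable over $\Real$; by Proposition~\ref{proposition:SigmaOneOneHardnessOfRecurrence} this yields $\Sigma^1_1$-hardness. No matching upper bound is available here, as Halpern's result is confined to $\Nat$.

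For the ``only if'' direction I would take the interpretation $\cI_\tau$ that the proof of Theorem~\ref{theorem:SigmaOneOneCompleteness} builds from a recurring run $\tau$, and reread it over $\Real$ with $P^{\cI_\tau}\subseteq\Nat$. Because $P^{\cI_\tau}$ contains no non-integers, each of $\phiChunk(t),\phiBreak(t),\phiStart(t),\phiEnd(t),\chi_j(t)$ is false under any valuation assigning a non-integer real to the displayed term; hence no universally quantified implication of $\varphi_1$, $\varphi_3$ or $\varphi'_\cM$ acquires a premise-true instance beyond those already present over $\Nat$, and all these sentences hold over $\Real$ for exactly the reasons they held over $\Nat$. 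The ground conjunction $\varphi'_2$ talks only about integer positions and is true by construction. Finally $\varphi'_5$ holds because recurrence of $\tau$ makes $\chi_0(4^i d)$ true for infinitely many $i$, and $4^i d\to\infty$, so every real $x$ has a witness $y=4^i d\ge x$.

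For the ``if'' direction, let $\cI\models\varphi$ over $\Real$; the task is to re-establish over $\Real$ the structural facts (a)--(g) from the proof of Lemma~\ref{lemma:SimulationOfComputations}. The decisive point is that $\varphi_1$ is rigid enough to confine all delimiter patterns to the intended integer positions: from (\ref{formula:I}), (\ref{formula:III}) and induction $\phiChunk$ holds at every $4^i d$; from (\ref{formula:II}) and (\ref{formula:IV}) (using $d\ge 6$) it holds nowhere else, in particular at no non-integer --- a putative $\phiChunk(x_0)$ with $x_0\in[4^i d,4^{i+1}d)$ is killed by (\ref{formula:IV}) against $4^i d$, while one with $x_0<d$ is killed by (\ref{formula:II}), directly if $x_0+2<d$ and otherwise by propagating (\ref{formula:III}) to $\phiChunk(4x_0)\in[d,4d)$ and invoking (\ref{formula:IV}) --- and analogously (\ref{formula:V})--(\ref{formula:VII}) leave no spurious $\phiBreak$ inside any interval $[4^i d,4^{i+1}d)$. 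Inside each of the three counter subchunks, the bits pinned at its ends by the surrounding delimiters, together with the absence of any $\phiStart$ point strictly inside it (which follows from (\ref{formula:VIII}) once $\phiBreak$ is known to hold there exactly at $2\cdot 4^i d$ and $3\cdot 4^i d$), force a unique \emph{integer} position $t$ with $\phiEnd(t)$: existence is the elementary fact that along the consecutive integer positions from the left end (where $P$ holds, by the left delimiter) to the right end (where $P$ fails, by the right delimiter) there is an adjacent $10$; uniqueness is the finite chain $\neg P(t+1),\neg P(t+2),\ldots$ of the integer proof, which proceeds in unit integer steps and is therefore insensitive to the ambient domain. Thus $\cI$ assigns to the $i$-th chunk a well-defined triple $\langle\ell_i,c_{1,i},c_{2,i}\rangle$; by $\varphi'_2$ the $0$-th triple is $\langle 0,0,0\rangle$, by $\varphi_3$ every $\ell_i\le K$, and by $\varphi'_\cM$ the resulting sequence is a run of $\cM$ (the disjunctions $\chi_{j'}(4x)\vee\chi_{j''}(4x)$ in the consequents permit precisely the legal nondeterministic successors). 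Finally $\varphi'_5$ forces $\{\, i \mid \chi_0(4^i d) \,\}$ to be unbounded, so program line $0$ is entered infinitely often and the run is recurring. Any additional, non-integer behaviour of $P$ outside the ranges analysed above is harmless: we are handed a model and merely read a recurring run off its integer positions, which affects neither $\varphi'_2$ and $\varphi'_5$ nor the instantiations of $\varphi'_\cM$ at the relevant integer witnesses.

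The step I expect to be the main obstacle is precisely this ``if'' direction --- ruling out that a real model of $\varphi_1$ realizes extra $01$- or $10$-patterns of $P$ at non-integer positions that would corrupt the encoded configuration sequence. As sketched, the resolution is that the self-referential propagation of $\phiChunk$ through (\ref{formula:III}) together with the barrier formulas (\ref{formula:IV})--(\ref{formula:VIII}) forces every delimiter to sit at one of the designed integer positions, after which the counter values are read off by finite chains running from integer to integer, which therefore transfer unchanged from $\Nat$ to $\Real$.
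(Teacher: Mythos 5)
Your overall strategy is exactly the one the paper intends (it states this theorem without a separate proof, relying on the reduction of Theorem~\ref{theorem:SigmaOneOneCompleteness} plus the domain-transfer argument given before Theorem~\ref{theorem:UndecidabilityOverReals}), and your ``only if'' direction and your confinement of $\phiChunk$ to the positions $4^i d$ over $\Real$ via (\ref{formula:I}), (\ref{formula:II}) and (\ref{formula:IV}) are correct --- the latter is precisely the one genuinely new fact needed, since it forces the existential witnesses of $\varphi'_5$ onto the integer grid.

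However, one step in your ``if'' direction does not hold as stated: the uniqueness of the $\phiEnd$ position inside a subchunk. Your argument --- ``the finite chain $\neg P(t+1), \neg P(t+2), \ldots$ proceeds in unit integer steps and is therefore insensitive to the ambient domain'' --- only compares two candidate positions lying in the \emph{same} residue class modulo $1$; it cannot rule out a second, non-integer $\phiEnd(b)$ with $b$ in a different coset of $\mathbb{Z}$. Indeed, $\varphi_1$ alone does not exclude non-integer $P$-activity: formula~(\ref{formula:VIII}) only constrains $\phiStart(s)$ for $s$ in the windows $(4^j d+5,\,4^{j+1}d)$, so a rising edge of $P$ in a non-integer coset may sit undetected in one of the gaps $[d-1,d+5]$ or $[4^j d,\,4^j d+5]$, and ruling out that the corresponding falling edge lands inside a counter subchunk requires a further argument involving $\varphi'_\cM$ (the forced image $\phiEnd(6x+y+1)$ etc.\ in the next chunk cannot be realized), which you do not give. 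Fortunately, none of this is needed: since every sentence of the encoding other than $\varphi'_5$ is universal and all its terms map integer instantiations to integer positions, the restriction of a real model to $\Nat$ is a $\Nat$-model of $\varphi_1 \wedge \varphi'_2 \wedge \varphi_3 \wedge \varphi'_\cM$, so the entire analysis of Lemma~\ref{lemma:SimulationOfComputations} applies verbatim to the integer positions and yields a legal run; combined with your (correct) observation that $\varphi'_5$'s witnesses must be of the form $4^i d$, recurrence follows. I recommend replacing the attempted re-derivation of properties (d)--(g) over $\Real$ by this restriction argument, which is both shorter and avoids the unsupported cross-coset uniqueness claim.
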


\section{On the relevance to verification}\label{section:Verification}

In verification one usually abstracts from some of the limitations that apply to real-world computing devices.
In particular, memory is often regarded as an infinite resource in one way or another. 
This can be due to infinitely many memory locations---similarly to the infinite tape of a Turing machine---or due to the capability of storing arbitrarily large integers in one memory location---similarly to the counters of counter machines.

In our encoding of two-counter machines the uninterpreted predicate symbol $P$ serves as a representation of an unbounded memory.
As we have pointed out, any interpretation $P^\cI \subseteq \Nat$ can be conceived as an infinite sequence of bits. And these bits can be accessed by integer addresses. We have also pointed out in Section~\ref{section:UndecidabilityWithFunctionSymbols} that the same applies to uninterpreted function symbols over the integers or some co-domain with at least two distinct elements.

Clearly, this means that our results are relevant to all verification approaches in which an infinite memory is modeled and in which there are sufficiently strong means available to access individual memory locations. We shall discuss several exemplary settings: separation logic over an integer-indexed heap, logics formalizing integer-indexed arrays or similar data structures, logics with restricted forms of linear integer arithmetic.

Verification is one driving force behind attempts to the combination of theories, such as integer or real arithmetic and the \emph{theory of equality over uninterpreted functions (EUF)}.\footnote{For references see, e.g., Chapter 10 in \cite{Bradley2007b}, or Chapters 10 and 12 in \cite{Kroening2016}.}
For quantifier-free cases the Nelson-Oppen procedure \cite{NelsonOpppen79} provides a general-purpose framework that yields a decision procedure for combined theories from decision procedures for the constituent theories.
Over the course of the last decade numerous approaches have been proposed to go beyond the quantifier-free setting and handle quantification \cite{Flanagan2004, Detlefs2005, Ge2009a, Ge2009b, Bjorner2013, Reynolds2014, Reynolds2015}. 
Typically, some kind of heuristic is applied to guide instantiation to ground formulas. Often the methods are incomplete in the sense that unsatisfiable sentences are not necessarily recognized as such.
Nevertheless, the proposed methods have been implemented and successfully applied, e.g.\ in the tools Verifun, Simplify, and the CVC family.

These approaches inevitably face undecidability when they allow too liberal syntax that combines arithmetic with uninterpreted function or predicate symbols.
The hardness results presented in this paper draw a sharp line around what is possible in such settings.
In the remaining sections, we give reasons why incomplete heuristics is sometimes the best one can expect.

\subsection{Separation logic}\label{section:VerificationSeparationLogic}
In \cite{Reynolds2017} the Bernays--Sch\"onfinkel--Ramsey fragment ($\exists^*\forall^*$-sentences) of separation logic is investigated. 
The quantifiers range over memory locations.
Although Reynolds, Iosif, and Serban also present a refinement of Halpern's undecidability result \cite{Halpern1991} for Presburger arithmetic with an additional uninterpreted unary predicate symbol, their approach differs from ours in an important aspect.
In their setting it is sufficient to consider models wherein the unary predicate is interpreted over \emph{finite} subsets of $\Nat$.
In our case finite subsets do not suffice. 
It is due to this difference, that their strategy can be used to also show undecidability of the satisfiability problem for $\exists^*\forall^*$-sentences of separation logic over a heap with \emph{finitely} many integer-indexed memory locations, each capable of storing one integer of arbitrary size.

Our results in Sections~\ref{section:Encoding} and \ref{section:SigmaOneOneCompleteness} have implications for settings with integer-indexed heaps that comprise a countably infinite number of memory locations, each capable of distinguishing at least two values (e.g.\ $0$ and $1$) or states (e.g.\ \emph{allocated} and \emph{not allocated}).
However, a slight modification of the encoding in Section~\ref{section:FirstEncoding} leads to a result that subsumes Theorem~3 in \cite{Reynolds2017} and also entails undecidability of the satisfiability problem for the $\exists^*\forall^*$-fragment of separation logic with integer-indexed heaps that comprise finitely many memory locations, each capable of storing at least one bit of information.
\begin{lemma}\label{lemma:UndecidabilityOverFiniteSubsets}
	Let $\cM$ be a two-counter machine with $K+1$ program lines, labeled $0, \ldots, K$, and let $\<m,n\>$ be a pair of nonnegative integers.
	There is a sentence $\varphi$ from the $(\exists\forall^*)$-fragment of Presburger arithmetic with an additional unary predicate symbol $P$, such that the following statements are equivalent:
		\begin{enumerate}[label=(\alph{*}), ref=(\alph{*})]
			\item $\varphi$ is satisfied by an interpretation $\cI$ under which $P^\cI$ is a finite subset of $\Nat$,
			\item $\cM$ reaches the \texttt{halt} instruction when started on the input $\<m,n\>$.
		\end{enumerate}	
\end{lemma}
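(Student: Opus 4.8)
The plan is to reuse the encoding from Section~\ref{section:FirstEncoding} almost verbatim, but to curtail it after the halting configuration so that the witnessing interpretation can keep $P^\cI$ finite. First I would observe the obstacle that blocks a direct reuse of $\varphi_1 \wedge \varphi_2 \wedge \varphi_3 \wedge \varphi_\cM$: the formula $\varphi_1$ (specifically Subformula~(\ref{formula:III})) forces the chunk structure to continue indefinitely, so in any model $P^\cI$ is necessarily infinite. To avoid this, the idea is to make the propagation conditional on \emph{not having halted yet}. Concretely, I would replace $\varphi_4$ (which asserted the machine never halts) by a statement that the machine \emph{does} halt, of the form $\exists x.\, \phiChunk(x) \wedge \chi_K(x)$; this single existential is the only quantifier alternation we spend, matching the claimed $(\exists\forall^*)$-prefix. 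Then I would weaken the universally quantified propagation formulas so that each instruction-encoding implication (and the relevant conjuncts of $\varphi_1$, in particular Subformula~(\ref{formula:III})) fires only at chunks $x$ for which no $\chi_K$ has yet occurred at $x$ or any earlier chunk. Since the signature has no predecessor and we want to stay universal, the clean way is to add a guard like $\neg\chi_K(x)$ to the premise of every propagation implication (using the fact that once $\texttt{halt}$ is reached the encoding of the \texttt{halt} instruction sends control to line $K$, and line $K$'s own instruction is \texttt{halt}, so $\chi_K$ persists). This yields a set of sentences $\varphi$ whose only unbounded-growth requirement is switched off exactly at the first chunk carrying $\chi_K$.

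Next I would prove the two implications of the equivalence. For (b)$\Rightarrow$(a): if $\cM$ halts on $\<m,n\>$, its run $\tau$ is finite, ending in some halting configuration; as in the proof sketch of Lemma~\ref{lemma:SimulationOfComputations} this run translates into an interpretation $\cI_\tau$ representing $\tau$, but now I truncate the bit sequence right after the chunk encoding the first line-$K$ configuration (or, equivalently, set $P^\cI = \emptyset$ beyond that point). One checks that the guarded propagation formulas are vacuously satisfied at and beyond the halting chunk (their premises fail because $\chi_K$ holds), that the weakened (\ref{formula:III}) no longer demands a successor chunk, and that $\exists x.\,\phiChunk(x)\wedge\chi_K(x)$ is witnessed by that last chunk; hence $\cI_\tau \models \varphi$ with $P^{\cI_\tau}$ finite. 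For (a)$\Rightarrow$(b): given a model $\cI$ with $P^\cI$ finite, the technical observations (a)--(g) in the proof of Lemma~\ref{lemma:SimulationOfComputations} still hold \emph{up to} the first chunk satisfying $\chi_K$ — the guards only switch the dynamics off afterwards, so the prefix of the encoded sequence faithfully simulates $\cM$'s run. Because $P^\cI$ is finite, the chunk structure \emph{cannot} continue forever; since the only way for the weakened (\ref{formula:III}) to legally stop producing the next chunk is for $\chi_K$ to hold at the current chunk, some chunk must satisfy $\chi_K$ — and the existential conjunct of $\varphi$ guarantees this directly anyway. Tracing this chunk back through the (now guarded) instruction encodings shows it is reached by a genuine computation of $\cM$, i.e.\ $\cM$ reaches \texttt{halt}.

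The main obstacle I anticipate is making the guard $\neg\chi_K(x)$ do its job cleanly while staying in the purely universal fragment: I must be sure that (i) adding the guard does not accidentally let a model ``stop early'' at a chunk where $\chi_K$ spuriously holds for the wrong reason — here the delimiter-uniqueness Formulas~(\ref{formula:IV})--(\ref{formula:VIII}) of $\varphi_1$, which are unguarded, still pin down the bit pattern of every chunk that exists, so $\chi_j$ for $j\le K$ is unambiguous; and (ii) that once $\chi_K$ holds at some chunk it is consistent (indeed forced, via the weakened (\ref{formula:III})) for the sequence to terminate there, rather than being obligated to continue. A secondary subtlety is the single existential quantifier: I must confirm it genuinely suffices and that no $\forall\exists$ alternation creeps in through the $\leq$-abbreviations in the guarded premises — but all occurrences of $\leq$ there are between terms with the same variables, and can be rewritten à la Subformula~(\ref{formula:VIII}) or treated as in Section~\ref{section:ReducedEncoding}, so the prenex form is indeed $\exists\forall^*$. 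Once these points are nailed down, the lemma follows, and — as the surrounding text indicates — it transfers to the $\exists^*\forall^*$-fragment of separation logic over finitely many integer-indexed, one-bit memory cells, subsuming Theorem~3 of \cite{Reynolds2017}.
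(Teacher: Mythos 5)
Your proposal is correct and follows essentially the same route as the paper: curtail the chunk propagation of Subformula~(\ref{formula:III}), spend the single existential quantifier on a conjunct of the form $\phiChunk(\cdot)\wedge\chi_K(\cdot)$ marking the halting chunk, and truncate the witnessing model to all zeros beyond that chunk. The only difference is an implementation detail: the paper stops the propagation with the bound $x<z$ on the existential witness $z$ (and simply omits the encoding of program line $K$), whereas you stop it with a guard $\neg\chi_K(x)$ --- both work, and the checks you flag (unambiguity of $\chi_K$ via the unguarded structural formulas, finiteness forcing the propagation to terminate) are exactly the points the paper's argument relies on.
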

\begin{proof}[Proof sketch]
	Let $\varphi''_\cM$ be the encoding of $\cM$'s program in accordance with Section~\ref{section:FirstEncoding} with the exception that we do not encode the instruction in program line $K$. Due to our conventions, this program line contains the \texttt{halt} instruction.
	
	Let $\varphi''_1(z)$ be the result of replacing the subformula (\ref{formula:III}) in $\varphi_1$ with
		\[ \forall x.\; x < z \;\wedge\; \phiChunk(x) \;\;\longrightarrow\;\; \phiBreak(2x) \;\wedge\; \phiBreak(3x) \;\wedge\; \phiChunk(4x) ~.\]	
	Moreover, let 
		\[ \varphi''_4(z) :=\;\; \phiChunk(z) \wedge \chi_K(z) ~.\]		
	Notice that both formulas $\varphi''_1(z)$ and $\varphi''_4(z)$ contain the free variable $z$.
	We now set 
		\[ \varphi :=\;\; \exists z.\; \varphi''_1(z) \wedge \varphi_2 \wedge \varphi_3 \wedge \varphi''_\cM \wedge \varphi''_4(z) ~. \]
	
	There exists a model $\cI'$ of $\varphi$ if and only if $\cM$ reaches program line $K$ when started on the input $\<m,n\>$.
	Due to the modifications in $\varphi''_1$, the formula $\phiChunk(x)$ does not have to be satisfied for arbitrarily large values of $x$.
	One consequence is that the run of $\cM$ represented by a model of $\varphi$ can be aborted at the point when program line $K$ is reached. This means, in contrast to the proof of Lemma~\ref{lemma:SimulationOfComputations}, we do not have to artificially continue $\cM$'s run beyond that point. Hence, any model of $\varphi$ can be modified in such a way that from a certain point on the bit sequence represented by the interpretation of $P$ contains only zeros.
\end{proof}

\subsection{Verification of data structures}\label{section:VerificationDataStructures}
There are undecidability results in the context of verification of programs that use integer-indexed arrays as data structures.
Examples can be found in \cite{Bradley2006} (Section~5), \cite{Bradley2007a} (Sections~2.4 and 2.6.3), \cite{Habermehl2008} (Section 3). 
The reductions presented therein are based on arrays with infinite co-domains, such as the integers or the reals.
Moreover, they typically use at least one quantifier alternation (but face other restrictions of syntax).
Usually, several arrays are used for convenience, but could be merged into one. 
For our proof approach a single array is sufficient as well.

Read operations on integer-indexed arrays can be formalized as uninterpreted functions with an integer domain.
Hence, our results, Theorems~\ref{theorem:UndecidabilityWithFunctions} and \ref{theorem:UndecidabilityWithFunctionsOverReals} in particular, show that reasoning about integer- or real-indexed arrays over a \emph{finite co-domain} with at least two elements can lead to undecidability, if constraints on array indices allow the necessary syntactic means.
Notice that for the proof it is not necessary to have write operations on arrays. This means, a single integer-indexed read-only array over a Boolean co-domain suffices.

The mentioned results and arguments hold for arrays that comprise an infinite number of elements.
However, due to Lemma~\ref{lemma:UndecidabilityOverFiniteSubsets}, undecidability arises also in the context of finite arrays (over finite co-domains), as long as their length is not bounded by a concrete number.

\begin{remark}
	The above arguments are also applicable to recursively defined data structures, such as lists or trees, as soon as there are sufficiently strong syntactic means available to access the stored information. This means, if one can essentially simulate arrays using a recursive data structure, then our results apply immediately.
	Examples of such setting are lists where the stored elements can be addressed by integers, or where one can access the sublist starting at the position that is $x$ nodes away from the head (for some integer-sort variable $x$ that may be universally quantified). 
\end{remark}	

\subsection{Verification using counter arithmetic}\label{section:VerificationEUFwithCounterArithmetic}

In \cite{Bryant2002} the fragment \emph{CLU} is introduced, which constitutes a strongly restricted fragment of Presburger arithmetic with additional uninterpreted function and predicate symbols. A less syntactically sugared subfragment is treated in \cite{Ganzinger2004} and in \cite{Armando2009}. Regarding arithmetic, there are only two operators available in CLU: the successor operator \texttt{succ} and the predecessor operator \texttt{pred}. The language of CLU does not contain an interpreted constant that addresses zero. On the other hand, some syntactic elements are added for convenience, such as lambda abstraction and an \texttt{if-then-else} operator.
The fragment was chosen for its expressiveness and the fact that it facilitates efficient reasoning.
Although quantifier-free in its original definition, the authors state about their verification tool \emph{UCLID} that they ``have built some support for quantifiers in CLU using automatic quantifier instantiation heuristics'' (\cite{Bryant2002}, Section~7).

In what follows, we consider the extension of CLU by universal quantification of integer variables.
We shall refer to this extended language as \emph{uCLU}.
By a result due to Gurevich \cite{Gurevich1976} (see also \cite{Borger1997}, Theorems 4.1.8 and 4.1.11), satisfiability of EUF sentences with universal quantification is undecidable. Hence, satisfiability of uCLU sentences is undecidable as well.
\begin{proposition}[corollary of the Main Theorem in \cite{Gurevich1976}]
	(Un)satisfiability for uCLU sentences is undecidable.
\end{proposition}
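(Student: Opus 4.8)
The plan is to reduce satisfiability of purely universal EUF sentences --- undecidable by Gurevich's theorem recalled just above --- to satisfiability of uCLU sentences. The one subtlety is that a uCLU sentence is evaluated over a \emph{fixed} infinite domain (with \texttt{succ} and \texttt{pred} given their standard interpretation), whereas a universal EUF sentence may well be satisfiable only over finite domains; a verbatim embedding of EUF into uCLU therefore does not work, and overcoming this mismatch is the only real content of the argument.

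Concretely, let $\varphi = \forall y_1 \ldots \forall y_n.\, \psi$ be a universal EUF sentence, with $\psi$ quantifier-free over uninterpreted function symbols $f_1, \ldots, f_k$ (of arities $a_1, \ldots, a_k$), uninterpreted predicate symbols, and --- after adjoining a fresh constant symbol $c$ if there is none, which does not affect satisfiability --- at least one constant $c$. I introduce a fresh unary uninterpreted predicate symbol $D$ and form
	\[ \varphi^D \;:=\; \Bigl( \forall \bar y.\; \bigwedge_{i} D(y_i) \longrightarrow \psi \Bigr) \;\wedge\; \bigwedge_{j=1}^{k} \Bigl( \forall \bar y.\; \bigwedge_{i} D(y_i) \longrightarrow D\bigl( f_j(\bar y) \bigr) \Bigr) \;\wedge\; D(c) ~, \]
where in the $j$-th closure conjunct $\bar y$ has length $a_j$. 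The sentence $\varphi^D$ uses no arithmetic operators at all, only equality, uninterpreted symbols, Boolean connectives and universal quantifiers, so it is a legitimate uCLU sentence.

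Next I would prove that $\varphi$ is EUF-satisfiable if and only if $\varphi^D$ is uCLU-satisfiable. For the forward direction, by L\"owenheim--Skolem a model of $\varphi$ can be taken countable, hence its carrier injects into the (infinite) domain of uCLU; transporting the interpretations of the $f_j$, the predicate symbols and $c$ along this injection, taking $D$ to be the image of the injection, and defining the $f_j$ arbitrarily on the complement yields a uCLU model of $\varphi^D$ --- the first conjunct is vacuously satisfied at every tuple having a component outside $D$, and holds at tuples inside $D$ because there the structure is an isomorphic copy of the EUF model. For the backward direction, in any uCLU model of $\varphi^D$ the set $D$ is nonempty (it contains $c$) and closed under every $f_j$, so restricting all functions and predicates to $D$ produces a bona fide EUF structure; the first conjunct of $\varphi^D$ then forces every instance of $\psi$ to hold in it, so this structure is a model of $\varphi$. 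Since $\varphi \mapsto \varphi^D$ is computable, a decision procedure for uCLU-satisfiability would decide satisfiability of universal EUF sentences, which is impossible; and as the set of uCLU sentences is itself decidable, undecidability of uCLU-satisfiability immediately yields undecidability of uCLU-unsatisfiability as well.

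The expected main obstacle is exactly the domain issue flagged above: one has to be convinced that passing from ``satisfiable in some structure'' to ``satisfiable over the one fixed infinite domain of uCLU'' costs nothing, and the relativising predicate $D$ (together with the closure axioms and the $D(c)$ conjunct guaranteeing nonemptiness) is what makes this work; the two model translations are then entirely routine structural inductions on the terms of $\psi$. If one wished to bypass even the relativisation, one could instead inspect the standard Turing-machine reduction underlying Gurevich's theorem and arrange it so that the produced sentences are satisfiable precisely when they admit a $\Nat$-like model, in which case the embedding into uCLU is direct; but the relativisation argument is cleaner and fully general.
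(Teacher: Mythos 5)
Your reduction is correct, and it is more explicit than what the paper actually does: the paper offers no argument at all beyond observing that uCLU syntactically subsumes universally quantified EUF (it cites Gurevich / B\"orger--Gr\"adel--Gurevich and writes ``Hence, satisfiability of uCLU sentences is undecidable as well''). The subtlety you isolate --- that uCLU sentences are evaluated over the one fixed infinite integer domain, while a universal EUF sentence such as $\forall x y.\, x = y$ is satisfiable only over finite domains --- is real, and the paper silently relies on the fact that the sentences produced by Gurevich's reduction are satisfiable iff they are satisfiable over an infinite (indeed countably infinite) domain, so that a verbatim embedding happens to work for \emph{those} sentences. Your relativisation via $D$, with the closure axioms and the nonemptiness witness, removes any dependence on the internals of Gurevich's construction and yields a reduction that is sound for \emph{arbitrary} universal EUF sentences; the price is a slightly longer formula and one extra uninterpreted predicate, which uCLU permits. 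Two small points: you should relativise \emph{every} constant of the original signature (not just the one adjoined constant $c$), which is automatic if you treat constants as arity-$0$ function symbols in your closure conjuncts; and in the forward direction the appeal to L\"owenheim--Skolem can be replaced by the simpler observation that a universal sentence, if satisfiable, is satisfied by the (at most countable) substructure generated by its constants. With those conventions in place the argument is complete.
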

On the other hand, the unsatisfiable sentences of pure first-order logic (and thus also of quantified EUF) are recursively enumerable.
We next argue that uCLU does not possess this property.

The encoding of two-counter machines from Section~\ref{section:Encoding} and \ref{section:SigmaOneOneCompleteness} cannot immediately be translated into uCLU.
First of all, we need to fix a point of reference that serves as zero (CLU does not contain $0$ as a built-in constant).
Moreover, expressions of the form $k x$ for some integer $k$ and some integer-sort variable $x$ require a form of addition that is not available as a built-in operation in uCLU. 
However, with unrestricted universal quantification over integer variables at hand, we can easily define addition as a  function.
Hence, we need only the following uninterpreted symbols to encode two-counter machines: one constant $c_0$ serving as zero, one binary function realizing addition, one uninterpreted unary function or predicate symbol serving as memory.

We define the addition function as follows, where we use $c_0$ as zero:
	\[
		\begin{array}{l@{\hspace{1ex}}rl}
			\forall x.\; &					&\Add(x,c_0) = x \\
			\forall x y.\; &\Succ(y) > c_0 \;\longrightarrow	&\Add\bigl( x,\Succ(y) \bigr) = \Add \bigl( \Succ(x),y \bigr) \\
			\forall x y.\; &\Succ(y) < c_0 \;\longrightarrow	&\Add\bigl( x,\Succ(y) \bigr) = x ~.
		\end{array}	
	\]

All constants that we use in the encoding shall be written as $\Succ^k(c_0)$ instead of just $k$.
Moreover, we add guards $x \geq c_0 \rightarrow \ldots$ to each sentence for every universally quantified variable $x$ that occurs in the sentence. 

As we have seen in Section~\ref{section:SigmaOneOneCompleteness}, in particular in Theorem~\ref{theorem:SigmaOneOneCompleteness}, $\forall\exists$ quantifier alternations lead to (un)satisfiability problems that are not even recursively enumerable. 
Since CLU allows uninterpreted function symbols, uCLU essentially allows $\forall^*\exists^*$ quantifier prefixes. 
Hence, we may introduce a fresh unary Skolem function $f_\text{init}$ and translate the sentence $\varphi'_5$ from Section~\ref{section:SigmaOneOneCompleteness} into the uCLU formula 
	\[ \forall x.\, x \geq 0 \;\longrightarrow\; x \leq f_\text{init}(x) \;\wedge\; \phiChunk\bigl( f_\text{init}(x) \bigr) \;\wedge\; \chi_0\bigl( f_\text{init}(x) \bigr) ~.\]
This means, we can transfer Theorem~\ref{theorem:SigmaOneOneCompleteness} to uCLU and thus obtain the following result.
\begin{proposition}\label{proposition:uCLUhardness}
	Neither the set of satisfiable uCLU sentences nor the set of unsatisfiable uCLU sentences is recursively enumerable.
	In particular, there cannot be any sound and refutationally complete calculus for uCLU.
\end{proposition}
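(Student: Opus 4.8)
The plan is to transfer Theorem~\ref{theorem:SigmaOneOneCompleteness} to uCLU by showing that the $\Sigma^1_1$-hardness encoding from Section~\ref{section:SigmaOneOneCompleteness} can be faithfully rewritten as a uCLU sentence, so that the set of satisfiable uCLU sentences is $\Sigma^1_1$-hard, and dually the set of unsatisfiable ones is $\Pi^1_1$-hard. Since $\Sigma^1_1$ and $\Pi^1_1$ are both strictly above $\Sigma^0_1$ in the analytical hierarchy (neither is recursively enumerable — indeed $\Sigma^1_1$-hard sets are not even arithmetical), neither set of uCLU sentences can be recursively enumerable, which in particular rules out any sound and refutationally complete calculus, since such a calculus would recursively enumerate exactly the unsatisfiable sentences.

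The key steps, in order, are as follows. First I would take the conjunction $\varphi_1 \wedge \varphi'_2 \wedge \varphi_3 \wedge \varphi'_\cM \wedge \varphi'_5$ that by the lemma preceding Theorem~\ref{theorem:SigmaOneOneCompleteness} is satisfiable if and only if the nondeterministic two-counter machine $\cM$ has a recurring run. Second, following the uCLU translation already set up in this section, I would fix the constant $c_0$ as the reference point for zero, define $\Add$ via the three displayed defining axioms, replace every numeral $k$ by $\Succ^k(c_0)$, replace each term $kx$ by the corresponding nested $\Add$ expression, and add the guard $x \geq c_0 \rightarrow \ldots$ for every universally quantified integer variable $x$; the uninterpreted memory predicate $P$ (or a unary function, as in Theorem~\ref{theorem:UndecidabilityWithFunctions}) carries over directly. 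Third, the only genuinely new point: the $\forall\exists$ sentence $\varphi'_5$ must be handled, and here I would Skolemize the existential quantifier by a fresh unary function $f_\text{init}$, writing $\varphi'_5$ as the universal uCLU sentence $\forall x.\, x \geq 0 \to x \leq f_\text{init}(x) \wedge \phiChunk(f_\text{init}(x)) \wedge \chi_0(f_\text{init}(x))$, which is sound because uCLU admits uninterpreted function symbols and Skolemization preserves satisfiability. Fourth, I would verify that the resulting uCLU sentence has a model exactly when the original encoding does, hence exactly when $\cM$ has a recurring run; combined with Proposition~\ref{proposition:SigmaOneOneHardnessOfRecurrence} this gives $\Sigma^1_1$-hardness of uCLU satisfiability, and taking complements gives $\Pi^1_1$-hardness of uCLU unsatisfiability. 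Finally I would invoke the standard fact that a $\Sigma^1_1$-hard (resp.\ $\Pi^1_1$-hard) set is not recursively enumerable, and that a sound and refutationally complete calculus would yield an r.e.\ presentation of the unsatisfiable sentences, a contradiction.

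The main obstacle I anticipate is the correctness of the arithmetic simulation inside uCLU, namely checking that the $\Add$ axioms together with the sign guards actually compute addition on the nonnegative integers so that all the delimiter-position arithmetic ($2x$, $3x$, $4x$, $6x+y+1$, $9x+z$, and so on) behaves exactly as in Section~\ref{section:Encoding}. One has to confirm that the recursion on $\Succ(y)$ terminates correctly for $y \geq c_0$, that the clause $\Succ(y) < c_0 \to \Add(x,\Succ(y)) = x$ does no harm once every variable is guarded to be $\geq c_0$, and that in every model the values of $\Add$ on the relevant arguments are pinned down — addition is only defined recursively, so one must argue that the guarded instances that actually occur in the encoding are forced to the intended values regardless of how $\Add$ behaves on unreachable arguments. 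Once this bookkeeping is in place, the reduction is a routine adaptation of the encoding already developed, and the analytical-hierarchy conclusion is immediate.
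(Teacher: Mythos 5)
Your proposal is correct and follows essentially the same route as the paper: translate the recurrence encoding of Section~\ref{section:SigmaOneOneCompleteness} into uCLU via the constant $c_0$, the axiomatized $\Add$ function, $\Succ^k(c_0)$ numerals and nonnegativity guards, Skolemize $\varphi'_5$ with the fresh unary function $f_\text{init}$, and conclude from $\Sigma^1_1$-hardness (resp.\ $\Pi^1_1$-hardness) that neither the satisfiable nor the unsatisfiable uCLU sentences are recursively enumerable. Your closing remarks on pinning down $\Add$ on the reachable guarded arguments address a verification step the paper leaves implicit, but this is a refinement of, not a departure from, the paper's argument.
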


In \cite{Armando2009} the authors present a combination result (Theorem~4.6) for the ground theories of \emph{integer-offsets} (the arithmetic subfragment of CLU embodied by the operators \texttt{succ} and \texttt{pred}), arrays, and/or EUF (as long as the signature of uninterpreted functions does not contain the array sort). The result states that the satisfiability of sentences in such combined theories can be decided using term-rewriting methods.
By a similar line of argument that led us to Proposition~\ref{proposition:uCLUhardness}, it follows that Theorem~4.6 in \cite{Armando2009} cannot be generalized to cases which admit quantification over integer-sort variables. But we do not only lose decidability, we also lose semi-decidability. In other words, it is impossible to devise sound and complete calculi for combinations of EUF and arithmetic---even in such a restricted form as in CLU---if universal quantification of integer variables is available.

\begin{remark}
	Note that lists plus an append, a length operator, and a $<$ predicate can be used to simulate natural numbers with addition.\footnote{Alternatively, the relation $<$ could be defined using equality testing on lists and existential quantification over list---in the same manner as we have defined $<$ in Presburger arithmetic in Section~\ref{section:basicDefinitionsPresburger}.}
	Hence, combining such a theory with EUF leads to undecidability, if universal quantification is allowed.
	
	Similarly, sets with disjoint union (or standard union plus a disjointness predicate or a membership predicate) and a cardinality function can simulate natural numbers with addition.
\end{remark}

\subsection{Almost uninterpreted formulas with offsets}\label{section:VerificationEUFwithOffsets}

In \cite{Ge2009b} Ge and de Moura define the fragment of \emph{almost uninterpreted formulas}.
It constitutes a combination of subfragments of first-order logic, EUF, and linear arithmetic over the integers.
Its language admits uninterpreted predicate symbols, function symbols and constant symbols.
Formulas are assumed to be given in CNF.
All variables are universally quantified, but may only occur as arguments of uninterpreted function or predicate symbols with the following exceptions.
Literals of the form $\neg (x \leq y)$, $\neg (x \leq t)$, $\neg (x \geq t)$, $\neg (x = t)$, $\neg (x \leq y + t)$, $x = t$ with integer-sort variables $x,y$ are allowed for all ground terms $t$ of the integer sort.
Moreover, terms of the form $f(\ldots, x + t, \ldots)$ and $P(\ldots, x + t, \ldots)$ are allowed for ground terms $t$ of the integer sort, function symbols $f$ and predicate symbols $P$.
In what follows we shall be more liberal with the syntax than this.
However, the formulas that we will present can be rewritten into equivalent ones that obey the above restrictions.

The encoding of two-counter machines in Section~\ref{section:FirstEncoding} requires different syntactic means than the ones available in Ge and de Moura's \emph{almost uninterpreted} fragment.
Hence, a proof of undecidability in the syntax of \cite{Ge2009b} needs a slight shift of paradigm.
We start from the reduced form outlined in Section~\ref{section:ReducedEncoding}, since this encoding requires at most two integer-sort variables in atomic constraints.
In our previous encodings the length of the chunks (substrings) storing a single configuration $\<\ell, c_1, c_2\>$ increases over time.
This behavior is necessary to formalize non-terminating runs---and recurring runs in particular---by satisfiable formulas. 
However, in order to formalize a run that eventually reaches the \texttt{halt} instruction by a satisfiable sentence, it suffices to fix the length of the chunks representing a single configuration to a size that can accommodate all configurations that occur in the run, depending on the machine program and on the given input. 
In Ge and de Moura's fragment uninterpreted constant symbols are available that can be used for this purpose.
In what follows, the uninterpreted constant $d$ is used to determine the length of subchunks, as depicted below.
\begin{center}
	\begin{picture}(410,100)
		\put(-2,35){\mbox{$0$}}
		
		\put(0,50){\line(1,0){15}}
		\put(19,47){\mbox{$\ldots$}}
		\put(35,50){\vector(1,0){370}}
		
		\put(0,43){\line(0,1){15}}
		\put(45,43){\line(0,1){7}}
		\put(140,46){\line(0,1){4}}
		\put(230,46){\line(0,1){4}}
		\put(320,43){\line(0,1){7}}
		
		\put(43,51){\mbox{\small$001011$}}
		\put(138,51){\mbox{\small$0011$}}
		\put(228,51){\mbox{\small$0011$}}
		\put(318,51){\mbox{\small$001011$}}
		
		\multiput(71,50)(0,4){3}{\line(0,1){2}}
		\multiput(156,50)(0,4){3}{\line(0,1){2}}
		\multiput(246,50)(0,4){3}{\line(0,1){2}}
		\multiput(346,50)(0,4){3}{\line(0,1){2}}
		
		\put(72,51){\mbox{\small$\overbrace{1\hspace{3ex}\ldots\hspace{3ex} 1}0\ldots$}}
		\put(157,51){\mbox{\small$\overbrace{1\hspace{3ex}\ldots\hspace{3ex} 1}0\ldots$}}
		\put(247,51){\mbox{\small$\overbrace{1\hspace{3ex}\ldots\hspace{3ex} 1}0\ldots$}}
		\put(347,51){\mbox{\small$1\hspace{1ex}\ldots\hspace{1ex} 10\ldots$}}
		
		\put(58,75){\small\begin{tabular}{c}unary encoding \\[-0.5ex] of the current\\[-0.5ex] program line \end{tabular}}
		\put(143,75){\small\begin{tabular}{c}unary encoding \\[-0.5ex] of the current\\[-0.5ex] value of $C_1$ \end{tabular}}
		\put(233,75){\small\begin{tabular}{c}unary encoding \\[-0.5ex] of the current\\[-0.5ex] value of $C_2$ \end{tabular}}
		
		\put(45, 80){\line(0,1){20}}
		\put(45,90){\vector(-1,0){25}}
		\put(-12,88){\small\begin{tabular}{c} earlier\\[-0.5ex] steps \end{tabular}}
		\put(320, 80){\line(0,1){20}}
		\put(320,90){\vector(1,0){25}}
		\put(340,88){\small\begin{tabular}{c} later\\[-0.5ex] steps \end{tabular}}
		
		\put(0,0){\line(0,1){30}}
		\put(45,0){\line(0,1){30}}
		\put(140,15){\line(0,1){15}}
		\put(230,15){\line(0,1){15}}
		\put(320,0){\line(0,1){30}}
		
		\put(1,15){\vector(1,0){43}}
		\put(44,15){\vector(-1,0){43}}
		\put(22,7){\mbox{$x$}}
		\put(46,7){\vector(1,0){273}}
		\put(319,7){\vector(-1,0){273}}
		\put(182,-1){\mbox{$x+3d$}}
		\put(46,22){\vector(1,0){93}}
		\put(139,22){\vector(-1,0){93}}
		\put(92,14){\mbox{$d$}}
		\put(141,22){\vector(1,0){88}}
		\put(229,22){\vector(-1,0){88}}
		\put(187,14){\mbox{$d$}}
		\put(231,22){\vector(1,0){88}}
		\put(319,22){\vector(-1,0){88}}
		\put(277,14){\mbox{$d$}}
	\end{picture}
\end{center}
Moreover, we now start the encoding of the run at the very first bit of the bit string represented by $P$.
We replace the formula $\varphi_1$ by the following, somewhat simpler formula $\varphi'''_1$.
Let $k$ be the result of the expression $\max(K+6, m+4, n+4)$, where $K$ is the address of the last program line and $m$ and $n$ are the input values.
\begin{align*}
	\varphi'''_1 :=\;\; 
	&d \geq k \;\wedge\; e \geq 0 \\
	&\wedge\; \phiChunk(0) \;\wedge\; \phiBreak(d) \;\wedge\; \phiBreak(2d) \;\wedge\; \bigl( \forall x.\; x < 0 \;\longrightarrow\; \neg P(x) \bigr) \\
	&\wedge\; \bigl( \forall x.\; \phiChunk(x) \;\wedge\; x < 3d \;\wedge\; x \neq 0 \;\longrightarrow\; \bot \bigr) \\
	&\wedge\; \bigl( \forall x.\; \phiBreak(x) \;\wedge\; x < 3d \;\wedge\; x \neq d \;\wedge\; x \neq 2d \;\longrightarrow\; \bot \bigr) \\
	&\wedge\; \bigl( \forall x.\; \phiChunk(x) \;\wedge\; x < e \;\;\longrightarrow\;\; \phiChunk(x+3d) \bigr)\\
	&\wedge\; \bigl( \forall x.\; \phiChunk(x+3d) \;\wedge\; x \geq 0 \;\;\longrightarrow\;\; \phiChunk(x) \bigr) \\
	&\wedge\; \bigl( \forall x.\; \phiBreak(x) \;\wedge\; x < e \;\;\longrightarrow\;\; \phiBreak(x+3d) \bigr) \\
	&\wedge\; \bigl( \forall x.\; \phiBreak(x+3d) \;\wedge\; x \geq 0 \;\;\longrightarrow\;\; \phiBreak(x) \bigr)
\end{align*}
The purpose of the uninterpreted constant $e$ is to mark the end of the run, as we will see later.

The sentences $\varphi_2$ and $\varphi_3$ can be adapted in the same spirit:
	\begin{align*}
		\varphi'''_2 &:=\;\; \chi_0(0) \;\wedge\; \phiEnd(d+3+m) \;\wedge\; \phiEnd(2d+3+n) \\
		\varphi'''_3 &:=\;\;  \forall x y.\; \phiChunk(x) \;\wedge\; \phiEnd(y) \;\wedge\; x + 5 + K < y \;\wedge\; y \leq x+d \;\;\longrightarrow\;\; \bot ~.
	\end{align*}	
The adapted encoding of an instruction \texttt{inc($C_1$)} comprises the formulas
	\begin{align*} 
		\forall x y.\; &\phiChunk(x) \;\wedge\; x+d \leq y \;\wedge\; y \leq x+2d \;\wedge\; \phiEnd(y) \;\wedge\; \chi_j(x) \\
			&\hspace{50ex} \longrightarrow\;\; \phiEnd(y+3d+1) \;\wedge\; \chi_{j+1}(x+3d)\\[1ex]
		\forall x z.\; &\phiChunk(x) \;\wedge\; x+2d \leq z \;\wedge\; z \leq x+3d \;\wedge\; \phiEnd(z) \;\wedge\; \chi_j(x) \;\;\longrightarrow\;\; \phiEnd(z+3d)
	\end{align*}				
The other instructions can be adapted analogously. The only exception is the \texttt{halt} instruction in the last program line which we shall not encode, as in the proof sketch  for Lemma~\ref{lemma:UndecidabilityOverFiniteSubsets}.

Finally, we also have to modify the condition that the two-counter machine halts at some point in time.
We use another uninterpreted constant $e$ for this purpose:
	\[ \varphi_4''' :=\;\; \phiChunk(e) \;\wedge\; \chi_K(e) ~.\]
Consequently, using the fragment given in \cite{Ge2009b}, we can encode the halting problem of a two-counter machine $\cM$ on input $\<m,n\>$ using only a single uninterpreted unary predicate symbol $P$ (or a single function symbol) plus two uninterpreted constant symbols $d,e$.
More precisely, if $\cM$ halts on $\<m,n\>$, then there is model $\cI$ of the encoding sentence such that $P^\cI$ is a finite set of integers.

The outlined formalization is sufficient for a halting run of a two-counter machine. However, we cannot formalize recurring counter machines in this way. Thus, we do not obtain hardness beyond recursive enumerability. Indeed, this is in line with \cite{Ge2009b}, where a refutationally complete calculus is given for the described fragment.

The realm of recursive enumerability can be left easily. For instance, it is sufficient to allow scalar multiplication combined with addition for integer-sort variables, i.e.\ expressions of the form $k x + y$ for integers $k$.
Similarly, it would suffice to admit expressions $g(x) + y$, as we can define, e.g., 
	\[ \Times_k(0) = 0 \;\;\wedge\;\; \forall x.\; x \geq 0 \;\rightarrow\; \Times_k(x+1) = \Times_k(x) + k \]
for any positive integer $k$.
With a syntax extended this way, one could realize the encoding from Section~\ref{section:FirstEncoding}.

\section{Conclusion}

In this paper we have sharpened the known undecidability results for the language of Presburger arithmetic augmented with uninterpreted predicate or function symbols.
We have shown that already the purely universal fragment of such extended languages yields an undecidable satisfiability problem. 
More precisely, we have shown $\Sigma^0_1$-completeness of the corresponding set of unsatisfiable sentences.
In the case of extensions by uninterpreted function symbols, the fragment can even be restricted to Horn clauses while retaining an undecidable satisfiability problem.
Moreover, we have strengthened Halpern's $\Sigma^1_1$-hardness result (Theorem 3.1 in \cite{Halpern1991}) in that we have shown that a single $\forall\exists$ quantifier alternation suffices for a proof. More precisely, the satisfiability problem for $\forall^*\exists^2$-sentences of this extended language is $\Sigma^1_1$-hard. 

In addition, we have transfered the mentioned undecidability and hardness results to the realm of linear arithmetic over the ordered real numbers augmented with a single uninterpreted predicate symbol.

Concerning automated reasoning, we have mentioned in Section~\ref{section:UnsolvabilityDegree} that there are refutationally complete deductive calculi for the purely universal and purely existential fragments of Presburger arithmetic with uninterpreted predicate symbols. In the existential case even decision procedures exist.
On the other hand, the hardness result presented in Section~\ref{section:SigmaOneOneCompleteness} entails that there cannot be sound and refutationally complete calculi that can handle $\forall\exists$ quantifier alternations, if the problem is not restricted any further. The same applies to fragments allowing unrestricted combinations of universal quantification and function symbols, as Skolem functions are at least as powerful as existential quantifiers in this context.

Apart from their theoretical value, the results presented in this paper are relevant for several areas of verification.
In Section~\ref{section:Verification} we have elaborated on the implications for the Bernays--Sch\"onfinkel fragment ($\exists^*\forall^*$-sentences) of separation logic, quantified theories of data structures, arrays in particular, and quantified combinations of the theory of equality over uninterpreted functions with strongly restricted fragments of Presburger arithmetic. Moreover, we have argued that in certain settings we cannot even hope for refutationally complete deductive calculi. In such cases we either have to content ourselves with heuristics instead of sound and complete methods or formulate restricted fragments that lead to less hard (un)satisfiability problems.


\subsection*{Acknowledgments}
	The authors thank Radu Iosif and Stanislav Speranski for inspiring and enlightening discussions.



\end{document}